%
%

\documentclass[11pt]{article}

\usepackage{epsfig,epstopdf,multirow,rotating,times,setspace}
\usepackage{graphicx}
\usepackage{amsthm,amsmath,calligra}
\usepackage{amssymb}
\usepackage{booktabs}
\usepackage[all]{xy}
\usepackage{mathtools}
\usepackage{color}
\usepackage{algorithm}
\usepackage{natbib}
\usepackage{mathrsfs}
\usepackage[margin=1in]{geometry}
\footskip 0.4in

\doublespacing

\setcounter{secnumdepth}{3}
\def\argmax{\mathop{\rm argmax}}

\newcommand{\s}{\ensuremath{\mathbb{S}}}

\newcommand{\real}{\ensuremath{\mathbb{R}}}

\newcommand{\ltwo}{\ensuremath{\mathbb{L}^2}}
\newcommand{\linf}{\ensuremath{\mathbb{L}^{\infty}}}
\newcommand{\lone}{\ensuremath{\mathbb{L}^1}}

\newcommand{\inner}[2]{\left\langle #1,#2 \right\rangle}

\newtheorem{theorem}{Theorem}
\newtheorem{lem}{Lemma}
\newtheorem{rem}{Remark}
\newtheorem{cor}{Corollary}

\newcommand{\abs}[1]{\left\vert#1\right\vert}
\newcommand{\set}[1]{\left\{#1\right\}}

\newcommand{\norm}[1]{\left\Vert#1\right\Vert}

\usepackage{psfrag}
\usepackage{mathrsfs}
\usepackage{accents}
\usepackage{authblk}
\usepackage{blindtext}

\newcommand{\ptrue}{p_{0}}
\newcommand{\ptemplate}{q}
\newcommand{\pdfs}{{\mathcal P}}
\newcommand{\pdfsm}{\pdfs_{M}}
\newcommand{\pdfsml}{\pdfs_{M, \lambda}}

\usepackage{amsfonts}
\usepackage{amsbsy}
\usepackage{amscd}
\usepackage{amsgen}
\usepackage{amsopn}
\usepackage{amstext}
\usepackage{amsxtra}
\usepackage{cmmib57}
\usepackage[mathscr]{eucal}
\usepackage{color}










\newcommand{\noi}{\noindent}

\newcommand{\laspace}{\;}

\newcommand{\eqcomma}{\laspace,}

\title{{\bf Shape-Constrained Univariate Density Estimation }}

\author[1]{Sutanoy Dasgupta\thanks{Corresponding Author: s.dasgupta@stat.fsu.edu}}
\author[2]{Debdeep Pati}
\author[2]{Ian H. Jermyn}
\author[1]{Anuj Srivastava}
\affil[1]{Department of Statistics, Florida State University, FL, USA}
\affil[2]{Department of Mathematics and Statistics, Durham University, UK}
\affil[3]{Department of Statistics, Texas A\&M University, TX, USA}

\date{}
\begin{document}
\maketitle
\begin{abstract}

While the problem of estimating a probability density function (pdf) from its observations is classical,
the estimation under additional shape constraints is both important and challenging. 
We introduce an efficient, geometric approach for estimating pdfs given the number of its modes. This approach explores 
the space of constrained pdf's using an action of the diffeomorphism group that preserves  their shapes. 
It starts with an initial template, with the desired number of modes and arbitrarily chosen heights at the critical points, 
and transforms it via: (1) composition by diffeomorphisms and (2) normalization to obtain the final density estimate. 
The search for optimal diffeomorphism is performed under the maximum-likelihood criterion and is accomplished by 
mapping diffeomorphisms to the tangent space of a Hilbert sphere, a vector space whose elements 
can be expressed using an orthogonal basis. This framework is first applied to shape-constrained univariate, 
unconditional pdf estimation and then extended to conditional pdf estimation. 
We derive asymptotic convergence rates of the estimator and demonstrate this approach using 
a synthetic dataset involving speed distribution for different traffic flow on Californian driveways. 

\end{abstract}
\tableofcontents{}

\newpage

\section{Introduction}
\label{sec:introduction}

Estimation of a probability density function (pdf) from a number of samples is an important and well-studied problem in statistics. It is useful for any number of statistical analyses, for example, quantile regression, or for indicating data features like skewness or multimodality.  The problem becomes more challenging, however, when additional constraints are imposed on the estimate, especially constraints on the shape of the densities allowed. The imposition of such constraints is motivated by the fact that if the true density is known to lie in a certain shape class, then one should be able to use that knowledge to improve estimation accuracy.

The most commonly studied shape constraints include log-concavity, monotonicity, and unimodality. The obvious extension to multimodality has been studied in the case of function estimation (see the very recent article by \citet{wheeler2017bayesian} and references therein), but there is considerably less work on density estimation under this type of constraint. It is this type of constraint that we focus on in this paper. 

The earliest estimate for a unimodal density was given by \citet{Grenander1956-nt}, who showed that a particular, natural class of estimators for unimodal densities is not consistent, and presented a modification that is consistent. Over the last several decades, a large number of papers have been written analyzing the properties of the \emph{Grenander estimator}, e.g.~\citep{Rao1969-fc,Izenman1991-nk} and its modifications~\citep{Birge1997-gh}. An estimator using a maximum likelihood approach was developed by \cite{Wegman1970-ux}.

The earlier papers  assumed knowledge of the position and value  of the mode, and applied monotonic estimators over subintervals on either side of it. Later papers, for example \cite{Meyer2001-zh,Bickel1996-oi}, include an additional mode-estimation step. Other papers developed Bayesian methods, for example \cite{Brunner1989-rj}. \cite{Hall2002-lw} uses a tilting approach to transform the estimated pdf into the correct shape. \cite{Turnbull2014-ag}, in addition to describing an estimator that uses Bernstein polynomials with the weights chosen to satisfy the unimodality constraint, also provide a useful summary of recent results on unimodal density estimation.

Closer to our work in this paper, \cite{Cheng1999-up} use a \emph{template} function to estimate unimodal densities. Given an unconstrained  estimator, they start from a template unimodal density and provide a sequence of transformations that when applied to the template both keep the result unimodal, and ``improve" the  estimate in some sense. However, the method is \emph{ad hoc}, and convergence, although seen empirically, is by no means guaranteed. 

In contrast, we take a principled geometric approach to the problem. The advantages of our method are as follows. First, while estimation is still based on transformation of an initial template, we apply only a single transformation rather than a (possibly non-convergent) sequence. Coupled with a small number of other parameters, this transformation constitutes a parametrization of the whole of the shape class of interest; there are no hidden constraints. Second, we use a broader notion of shape than previous work: in its simplest form we constrain the pdf to possess a fixed, but arbitrary, number of modes; we consider more general cases in section~\ref{sec:extensiontomoregeneralshapes}. Third, we use maximum likelihood estimation, guaranteeing optimality in principle, and allowing the derivation of asymptotic rates of convergence to the true density.

\subsection{Summary of method}

Our problem can be stated as follows: given independent samples $\set{x_{i}}_{i\in [1..n]}$, from a pdf $\ptrue$, with a known number $M > 0$ of well-defined modes, estimate this density ensuring the presence of $M$ modes in the solution. In order to do this, we construct a parameterization of the set of densities with $M$ modes, $\pdfsm$, as follows. Let the critical points of a pdf $p$ with $M$ modes be $\set{b_{a}}_{a\in [0..2M]}$, with $b_{0} = 0$ and $b_{2M} = 1$. We can define the height ratio vector $lambda$ of $p$ as the set of ratios of the height of the $(a+1)^{\text{th}}$ interior critical point to the height of the first (from the left) mode: $\lambda = \set{\lambda_{1}, \dotsc, \lambda_{2M-2}}$, where $\lambda_{a} = p(b_{a+1})/p(b_{1})$. Let the subspace of $\pdfsm$ with height ratio vector $\lambda$ be denoted $\pdfsml$. We then parameterize an arbitrary member of $\pdfsm$ by the following elements:
\begin{itemize}
\item A height rato vector $\lambda$;

\item a diffeomorphism $\gamma \in \Gamma$, where $\Gamma=\{\gamma: [0,1]\rightarrow [0,1] |\dot{\gamma}> 0,\gamma(0)=0,\gamma(1)=1\}$ is the group of diffeomorphisms of $[0, 1]$.
 
\end{itemize}   

\noindent Together these generate the pdf $p_{\lambda, \gamma} = (\ptemplate_{\lambda}, \gamma) \in \pdfsml$, where $\ptemplate_{\lambda}$ is a \emph{a priori} fixed template function in $\pdfsml$, and $(\cdot, \gamma)$ denotes a group action of $\Gamma$ on $\pdfs$, with the crucial property that it preserves $\lambda$.

Using this parameterization, we can construct the log likelihood function
	\begin{equation}
		L(\set{x_{i}} | \lambda, \gamma)
		=
		\sum_{i} \ln p_{\lambda, \gamma}(x_{i})
		\eqcomma
	\end{equation}

\noi and we can use maximum likelihood to estimate $\lambda$ and $\gamma$. 

The optimization involved is made challenging by the fact that $\Gamma$ is an infinite-dimensional, nonlinear manifold. To address this issue, we define a bijective map from $\Gamma$ into a unit Hilbert sphere (set of square-integrable functions with unit $\ltwo$ norm) and then \emph{flatten} this sphere around a pivot point to reach a proper Hilbert space. Using a truncated orthonormal basis, we can then represent elements of $\Gamma$ by a finite set of coefficients. The joint optimization over $\Gamma$ and $\Lambda$ can then be performed using a standard optimization package since these representations now lie in a finite-dimensional Euclidean space.

We can generalize this method to a larger set of shape classes by defining a shape as a sequence of monotonically increasing, monotonically decreasing, and flat intervals that together constitute the entire density function. For example, the shape of an ``N-shaped'' density function is given by the sequence (increasing, decreasing, increasing). For any such sequence, we can construct a template density in the appropriate shape class, and proceed with estimation as before.

\subsection{Overview}

The rest of the paper is organized as follows. Section~\ref{sec:geometricexploration} describes the parameterization of $\pdfsm$ in detail. Section~\ref{sec:estimationoftheparameters} describes the implementation of the maximum likelihood optimization procedure, and in particular, the parameterization of $\Gamma$. Section~\ref{sec:asymptoticconvergence} presents the asymptotic convergence rates associated with the proposed estimator, while 
Section~\ref{sec:simulationstudy} presents some experimental results on simulated datasets. Section~\ref{sec:extensiontomoregeneralshapes} extends the framework to the more general shape classes just mentioned, while Section~\ref{sec:extensiontoconditional} extends the framework to shape constrained conditional density estimation. Section~\ref{sec:application} presents a application case study. Section~\ref{sec:discussion} summarizes the contributions of the paper and discusses some associated problems, limitations and further possible extensions. The Appendix contains the derivations of the asymptotic convergence rate presented in Section~\ref{sec:asymptoticconvergence}.

\section{A Geometric Exploration of Densities}
\label{sec:geometricexploration}

In this section, describe the parameterization we use for the set $\pdfsm$ of densities with $M > 0$ modes. We start by introducing some notation and some assumptions about the underlying space of densities $\pdfs \supset \pdfsm$. 

In this framework we are primarily going to focus on pdf's that satisfy the following conditions: 
It is strictly positive and continuous with an interval support and zero boundaries.
(For simplicity of presentation, we will assume that the support is $[0,1]$.)
Furthermore, we assume that the pdf has  $M\geq 1$ well defined modes that lie in $(0,1)$.
Let $p$ be such a pdf and suppose that the  $2M+1$ critical points of $p$ are located at $b_i$, for $i=0,\cdots, 2M$, with 
$b_0=0$ and $b_{2M}=1$. Define the {\it height ratio vector} of $p$
to be $\lambda = (\lambda_1, \lambda_2, \dots, \lambda_{2M-2})$, where
 $\lambda_i= p(b_{i+1})/ p(b_1)$ is the ratio of the height of the $(i +1)^{st}$ interior critical point to the height of the first (from the left) mode. Please look at the top left panel of Figure \ref{eg} for an illustration.
We define
${\cal P}$  to be the set of all continuous densities on $[0,1]$ with zero boundaries.
 Let ${\cal P}_M \subset {\cal P}$ be the subset with $M$ modes and let ${\cal P}_{M,\lambda} \subset {\cal P}_M$ be a further subset 
 of pdf's with height ratio vector equal to $\lambda$. 
Define the set of all time warping functions to be $\Gamma=\{\gamma:[0,1] \rightarrow [0,1] | \dot{\gamma} > 0, \gamma (0)= 0, \gamma (1) = 1\}$.
This set is a group with composition being the group operation. The identity element of $\Gamma$ is $\gamma_{id}(t) = t$ and for every 
$\gamma \in \Gamma$ there exists a $\gamma^{-1} \in \Gamma$ such that $\gamma \circ \gamma^{-1} = \gamma_{id}$. 
\begin{theorem}
The group $\Gamma$ acts on the set ${\cal P}_{M,\lambda}$ by the mapping 
${\cal P}_{M,\lambda} \times \Gamma \to {\cal P}_{M,\lambda}$, given by $(p, \gamma)= {p \circ \gamma \over \int (p \circ \gamma)~~ dt}$. 
Furthermore, this action is transitive. That is, for any $p_1, p_2 \in {\cal P}_{M,\lambda}$, there exists a unique $\gamma \in \Gamma$ such that
$p_2 = (p_1, \gamma)$. 
\label{gamexist}
\end{theorem}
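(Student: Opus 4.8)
The plan is to split the statement into two parts: that the displayed formula defines a (right) action of $\Gamma$ on ${\cal P}_{M,\lambda}$, and that this action is \emph{simply} transitive. The first part is essentially bookkeeping; the second is the real content.

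For the action claim, I would first check closure: if $p\in{\cal P}_{M,\lambda}$ and $\gamma\in\Gamma$, then $(p,\gamma)\in{\cal P}_{M,\lambda}$. Strict positivity on $(0,1)$ and continuity are inherited from $p$ and the continuity of $\gamma$; the boundary values vanish since $\gamma(0)=0,\gamma(1)=1$; and the normalization makes $(p,\gamma)$ integrate to $1$. Because $\gamma$ is an increasing homeomorphism of $[0,1]$, $t_0$ is a local maximum (resp.\ minimum) of $p\circ\gamma$ iff $\gamma(t_0)$ is one of $p$, so $p\circ\gamma$—and hence $(p,\gamma)$, which differs from it by a positive constant—has exactly $M$ modes, located at $\gamma^{-1}(b_i)$. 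Since the value of $(p,\gamma)$ at $\gamma^{-1}(b_i)$ equals $p(b_i)$ divided by the constant normalizer, the ratios $\lambda_i$ are preserved, so $(p,\gamma)\in{\cal P}_{M,\lambda}$. The action axioms then follow by direct computation: $(p,\gamma_{id})=p/\!\int p=p$ because $p$ is already a density, and cancelling the normalizer of $(p,\gamma_1)$ in numerator and denominator gives $\big((p,\gamma_1),\gamma_2\big)=(p,\gamma_1\circ\gamma_2)$.

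For transitivity, fix $p_1,p_2\in{\cal P}_{M,\lambda}$ with critical points $0=b_0^{(k)}<b_1^{(k)}<\dots<b_{2M}^{(k)}=1$, $k=1,2$, the odd-indexed interior ones being modes. Since integrating $p_1\circ\gamma=c\,p_2$ forces $\int_0^1 p_1(\gamma(t))\,dt=c$, it is enough to produce $\gamma\in\Gamma$ with $p_1\circ\gamma=c\,p_2$ for some constant $c>0$; then $(p_1,\gamma)=p_2$ automatically. I would take $c=p_1(b_1^{(1)})/p_2(b_1^{(2)})$; because $p_1$ and $p_2$ share the height ratio vector $\lambda$, this gives $p_1(b_j^{(1)})=c\,p_2(b_j^{(2)})$ for \emph{every} $j$ (trivially at $j=0,2M$). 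On $[b_j^{(2)},b_{j+1}^{(2)}]$ the function $p_2$ is strictly monotone, and $p_1$ on the matching interval $[b_j^{(1)},b_{j+1}^{(1)}]$ is strictly monotone in the same sense with the same range (the interval between $c\,p_2(b_j^{(2)})$ and $c\,p_2(b_{j+1}^{(2)})$), so I can set $\gamma:=\big(p_1|_{[b_j^{(1)},b_{j+1}^{(1)}]}\big)^{-1}\circ(c\,p_2)$ there. The pieces agree at the knots $b_j$, so $\gamma$ is a continuous, strictly increasing bijection of $[0,1]$ fixing $0$ and $1$ with $p_1\circ\gamma=c\,p_2$. Uniqueness comes from running this backwards: given any $\gamma$ with $(p_1,\gamma)=p_2$, matching critical points forces $\gamma(b_j^{(2)})=b_j^{(1)}$, evaluating $p_1\circ\gamma=\big(\int_0^1 p_1\circ\gamma\big)p_2$ at $b_1^{(2)}$ forces the normalizer to be this same $c$, and injectivity of $p_1$ on each monotone branch then pins $\gamma$ down; so the action is simply transitive.

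The one genuinely delicate point is showing that the $\gamma$ just constructed actually lies in $\Gamma$, i.e.\ that $\dot\gamma>0$ everywhere. On the open monotone subintervals, differentiating $p_1(\gamma(t))=c\,p_2(t)$ gives $\dot\gamma(t)=c\,p_2'(t)/p_1'(\gamma(t))$, which is positive because numerator and denominator have the same sign; but at the knots $b_j$ and at $t=0,1$ this is an indeterminate $0/0$. This is where the regularity hidden in ``well-defined modes'' must be invoked—e.g.\ $p_k\in C^1$ with one-sided derivatives $p_k'(0),p_k'(1)\neq0$ and non-degenerate interior critical points—under which a L'Hôpital argument yields $\dot\gamma(0)=c\,p_2'(0)/p_1'(0)>0$ and $\dot\gamma(b_j)=\big(c\,p_2''(b_j)/p_1''(b_j)\big)^{1/2}>0$. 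I expect this regularity bookkeeping, rather than the construction itself, to be the main obstacle; with suitable standing assumptions in place everything else is routine.
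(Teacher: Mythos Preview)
Your proof follows essentially the same approach as the paper's: verify closure and the compatibility axiom directly, then build $\gamma$ piecewise by rescaling so the critical values match and composing $p_2$ with the local inverses of $p_1$ on each monotone branch. The one notable difference is at the knots: where you invoke extra regularity (non-degenerate critical points) and a L'H\^opital argument to get $\dot\gamma>0$ there, the paper instead concedes that the constructed $\gamma$ need not be classically differentiable at the $b_j$ and patches it by passing to a weak derivative (setting $D_\gamma=1$ at the bad points and redefining $\gamma_0=\int D_\gamma$)---so your treatment of this ``delicate point'' is, if anything, more careful than the paper's.
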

\begin{proof}
The new function $\tilde{p} \equiv (p, \gamma)$ is called the {\it time-warped} density or just {\it warped} density. 
To prove this theorem, we first have to establish that the warped density $\tilde{p}$ is indeed in the set ${\cal P}_{M,\lambda}$. 
Note that time warping by $\Gamma$ and the subsequent global scaling do not change the number of modes of $p$ since $\dot{\gamma}$ is strictly positive
(by definition). 
The modes simply get moved to their new locations $\{\tilde{b}_i = \gamma^{-1}(b_i) \}$. 
Secondly, the height ratio vector of $\tilde{p}$ remains the same as that of $p$. 
This is due to the fact that $\tilde{p}(\tilde{b}_i) \propto p(\gamma( \gamma^{-1}(b_i))) = p(b_i)$ 
and $\tilde{\lambda} = \tilde{p}(\tilde{b}_{i+1})/ \tilde{p}(\tilde{b}_1)  = p(b_{i+1})/ p(b_1) = \lambda$. 
Next, we prove the compatibility property that for every $\gamma_1, \gamma_2 \in \Gamma$ and $p$ , we have $(p, \gamma_1 \circ \gamma_2)=((p, \gamma_1), \gamma_2)$. 
Since,  
\begin{eqnarray*}
((p, \gamma_1), \gamma_2)={{p \circ \gamma_1 \over \int (p \circ \gamma_1)~~ ds} \circ \gamma_2 \over \int({p \circ \gamma_1 \over \int (p \circ \gamma_1)~~ ds} \circ \gamma_2)~~ dt}={p \circ \gamma_1 \circ \gamma_2 \over \int (p \circ \gamma_1 \circ \gamma_2)~~ dt}=(p, \gamma_1 \circ \gamma_2)\ ,
\end{eqnarray*}
this property holds. 

Finally, we prove the transitivity property:  
given $p, \tilde{p} \in {\cal P}_{M,\lambda}$, there exists a unique $\gamma_0 \in \Gamma$ such that $\tilde{p}=(p,\gamma_0)$. Let $h_p$ be the height of the first mode of $p$ and let $h_{\tilde{p}}$ be the height of the first mode of $\tilde{p}$. Then, 
define two nonnegative functions according to $g=p/h_p$ and $\tilde{g}=\tilde{p}/h_{\tilde{p}}$. Note that the 
height of both their first modes is $1$ and the height vector for the interior critical points is $\lambda$. 
Also, let the critical points of $p$ and $\tilde{p}$ (and hence $g$ and $\tilde{g}$, respectively) be located at $b_i$ and $\tilde{b}_i$ respectively, for $i=0, \cdots, 2M$.  
Since the modes are well defined, the function $g$ is piecewise strictly-monotonous and 
continuous in the intervals $[b_t,b_{t+1}]$, for $t=0,1,\cdots, 2M-1$. Hence, within each interval $[g(b_t),g(b_{t+1})]$
there exists a continuous inverse of $g$, termed $g_t^{-1}$.
Then, set  $\gamma_1(x)=g_t^{-1}\big(\tilde{g}(x)\big),x \in [\tilde{b}_t,\tilde{b}_{t+1}]$ is such that $(g\circ \gamma_1)=\tilde{g}$ and hence $(p,\gamma_1)=\tilde{p}$. Note that the $\gamma_1$ is uniquely defined, continuous, increasing, but not differentiable at the finitely many critical points $\tilde{b}_i$ in general. Hence $\dot{\gamma_1}$ does not exist at those points. But  $\dot{\gamma_1}$ can be replaced by a weak derivative of $\gamma_1$. Let $D_{\gamma}$ be a weak derivative of $\gamma_1$ that is equal to $\dot{\gamma_1}$ wherever $\dot{\gamma_1}$ exists, and $1$ otherwise. Define $\gamma_0 =\int D_{\gamma}$. Then $\gamma_0$ and $\gamma_1$ are equal and $\dot{\gamma_0}$ exists everywhere, and $(p,\gamma_0)=\tilde{p}$.  
 \end{proof}

Now note that ${\cal P}_M =\underset{\lambda}{\sqcup} {\cal P}_{M,\lambda}$. Thus for $p_0 \in {\cal P}_M$ the estimation procedure entails $(1)$ estimating the (unique) height ratio vector $\lambda_0$ such that $p_0 \in {\cal P}_{M,\lambda_0}$ $(2)$ constructing an element $p_1 \in {\cal P}_{M,\lambda_0}$, and $(3)$ estimating the time warping function $\gamma_0$ such that $p_0=(p_1,\gamma_0)$. Figure \ref{eg2} illustrates the height preserving effect of the composition of warping functions before normalization, and the {\it height ratio vector} preserving effect of the group action.
 
 \begin{figure}[t!]
\begin{center}
\begin{tabular}{cc}
\includegraphics[width=0.44\linewidth]{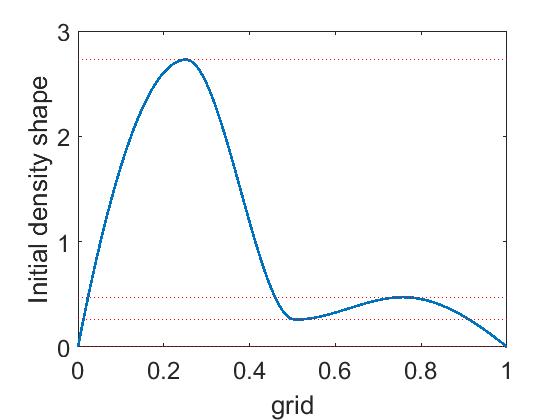} &
\includegraphics[width=0.44\linewidth]{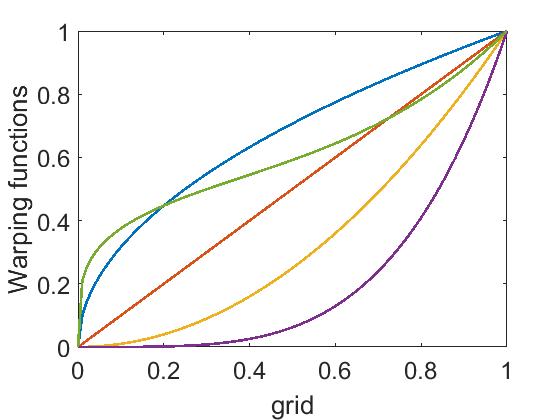}\\
\includegraphics[width=0.44\linewidth]{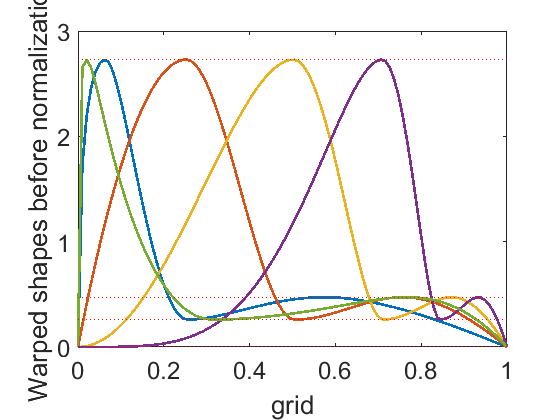} &
\includegraphics[width=0.44\linewidth]{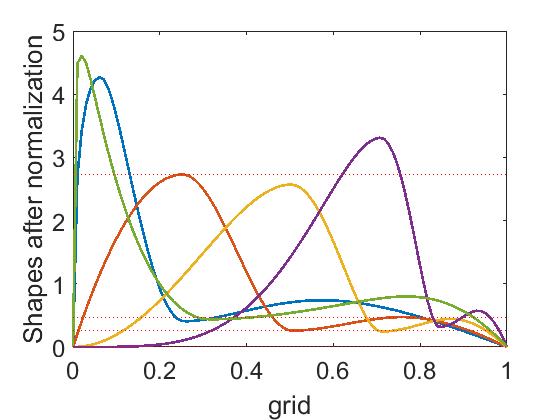}\\
\end{tabular}
\caption{\it The top left panel shows the initial density shape. The top right panel shows different warping functions considered for transforming the shape. The bottom left panel shows the resultant warped shapes which preserves the heights of the critical points. The bottom right panel shows the resultant warped densities after normalization which does not preserve the heights but preserves the {\it height ratio vector}.}
\label{eg2}
\end{center}
\end{figure}

 Assume, for the moment, that $\lambda_0$ corresponding to $p_0$ is known.
The estimation procedure is initialized with an arbitrary $M$ modal template function $g^{\omega}$ constructed as follows:

Set $g^{\omega}(0)=g^{\omega}(1)=\omega$ where $\omega$ is a very small positive number. Let the interval $[0,1]$ be divided into $2M$ equal intervals corresponding to the $M$ modes and $M-1$ interior antimodes. Let the location of the $j$th critical point be $a_j=j/2M$, with $a_0=0$, and $a_{2M}=1$.  Set the value of $g^{\omega}$ for the location of the left most mode $a_1$ to be $1$. Let the heights for the other $2M-2$ interior critical points be $\lambda_i$ for $i=1,\cdots, 2M-2$ which are the height ratio vector for the true density, assumed known for now. Represent this $g^{\omega}$ as $g_{\lambda_0}^{\omega}$. The values of $g_{\lambda_0}^{\omega}$ for the other points is obtained by linear interpolation. Then $ p_1 = g_{\lambda_0}^{\omega} / (\int g_{\lambda_0}^{\omega} ) \in {\cal P}_{M,\lambda}$. The final step of the procedure involves estimating the time warping function $\gamma_0$ such that $(p_1,\gamma_0)=p_0$. 
The key feature of this step of the estimation procedure is the geometry of the set $\Gamma$, which is
crucial in developing a maximum likelihood approach for estimating the $\gamma$ to be used to transform the original shape $g_{\lambda_0}^{\omega}$, since $\Gamma$ is a nonlinear space. Note that $\int_0^1 \dot{\gamma} (u) du=1$. Thus $q=\sqrt{\dot{\gamma}}$ are elements of the Hilbert sphere, with a known simple geometry, and associated linear tangent spaces which facilitate truncated orthogonal expansion to represent the elements of $\Gamma$. The entire procedure of estimating $\gamma_0$ by exploiting the geometry of $\Gamma$ is explained in detail in Section 3. Also the height ratio vector $\lambda_0$, assumed known till now, can be estimated jointly with $\gamma_0$ from the observations via maximum likelihood estimation, discussed in Section 3.2.

Algorithm $1$ provides the steps on how to construct the estimate of $\gamma_0$ given $\lambda_0$ in practice. Note that in practice one can start with a template function rather than a template density because it results in the same estimate of $\gamma_0$.
Figure \ref{eg} shows a simple example to illustrate the estimation procedure. The top left panel is the true density with $M=2$ modes with critical points located at $b_i$ with height $h_i$. The top right panel shows the initial template function with $M=2$ modes and critical points located at $a_i$ and heights $\lambda_i=h_i/h_1$. The bottom left panel shows the warping function constructed according to Algorithm $1$ and the bottom right shows that using the warping function, we get back the exact true density shape. Thus, given any initial template $g_{\lambda}^{\omega}$ the procedure entails estimating the correct height ratios $\lambda_i$ and the warping function.

When the bounds of a density function is not known, they are estimated from the data $X=X_1, X_2, \cdots,X_n$ using the formula $A=\min (X) - sd(X)/\sqrt{n}$ and $B=\max (X) + sd(X)/\sqrt{n}$ where A and B are the lower and upper bounds respectively, $sd(X)$ is the standard deviation of the observations and $n$ is the number of observations, as used in \cite{Turnbull2014-ag}. For a general $A$ and $B$, the data are scaled to the unit interval according to $Z_i=(X_i A)/(B-A)$ for the estimation process.
Note that theoretically the assumption $p_0 (A)=p_0 (B) =0$ can be relaxed  by considering the height ratios of the two boundaries as two extra parameters $\lambda_0$ and $\lambda_{2M+1}$. This allows the proposed framework to encompass a much broader notion of shapes. Specifically, ``shapes" can refer to an ordered sequence of monotonic pieces which when pieced together constitute the entire function. For example, a {\it V shaped} function can be written as a {\it decreasing-increasing} shape. Knowing this ``shape" allows us too incorporate the same shape in the template function and hence obtain a maximum likelihood density estimate in that specified shape class. In fact, this notion even allows one to model know flat modal or antimodal regions in the true density.  However, for experiments, estimating the boundary values to a satisfactory degree requires many observations (using the inbuilt optimization function {\it fmincon}). Hence in this paper we focus on developing the theory for densities which satisfy $p_0 (A)=p_0 (B) =0$. The theory for densities without this assumption is almost identical and results in the same convergence rate, and is not presented. However, we have discussed the idea in more detail in Section $6$ and have also presented some simulated examples. For illustration we focus on densities that decay at the boundaries. Then we estimate the effective support from the data and set the estimate to be zero at the estiated boundaries of the support. In this regard, note that $A$ and $B$ can be any real number and hence the above methodology can be used to estimate densities with entire reals as support. Here $A$ and $B$ play the role of effective support on which the numerical estimation is performed. 

\begin{algorithm}[h]
\caption{Construction of the warping function given a true density $\ptrue$ and the correct critical point height ratios $\lambda_i$ and the critical point locations $b_i$}
i. Start with an M modal template function  $g^{\omega}$. Construct $g^{\omega}$ by setting $g^{\omega}(0)=g^{\omega}(1)=\omega$. Divide the interval $[0,1]$ into $2M$ intervals corresponding to the $M$ modes and $M-1$ antimodes. Let the location of the $j$th critical point be $a_j$, with $a_0=0$, and $a_{2M}=1$. Set the value of $g^{\omega}$ for the first mode to be $1$, that is, $g^{\omega}(a_1)=1$. Let the heights for the other $2M-2$ critical points be the correct height ratios $\lambda_i$ for $i=1,\cdots, 2M-2$ for the true density $p_0$ . Represent this $g^{\omega}$ as $g_{\lambda}^{\omega}$. Obtain the values for the other points by interpolation.\\
ii. Let $\tilde{g}$ be the function $p_0/h_1$. Then $\int_0^1 \tilde{g} dx = 1/h_1$, which implies that $p_0=\tilde{g}/(\int_0^1 \tilde{g} dx)$. Then $\tilde{g}(b_1)=1$ and $\tilde{g}(b_i)=g^{\omega}(a_i)=\lambda_{i-1}$ for $i=2,3,\cdots, 2M-1$. Now,let
\begin{equation}
\Gamma=\{\gamma: [0,1]\rightarrow [0,1] |\dot{\gamma}> 0,\gamma(0)=0,\gamma(1)=1\}
\end{equation}
Then there exists a unique continuous function $\gamma_0$ such that $g_{\lambda}^{0}\circ \gamma_0=\tilde{g}$ where $g_{\lambda}^{0}=g_{\lambda}^{\omega}$ with $\omega=0$. The $\gamma_0$ can be constructed as follows. Since $g_{\lambda}^{0}$ is piecewise monotonous in the intervals $[a_t,a_{t+1}]$, for $t=0,1,\cdots, 2M-1$, there exists an inverse $g_t^{-1}$ in the interval $[g(a_t),g(a_{t+1})]$. Then $\gamma_0$ can be constructed piecewise by
\begin{equation}
\gamma_0(x)=g_t^{-1}\big(\tilde{g}(x)\big),x \in [b_t,b_{t+1}]
\end{equation}
iii. Then $p_0=(g_{\lambda}^{0}\circ \gamma_0)/(\int_0^1 g_{\lambda}^{0}\circ \gamma_0 dx)$
\end{algorithm}

 \begin{figure}[t!]
\begin{center}
\begin{tabular}{cc}
\includegraphics[width=0.44\linewidth]{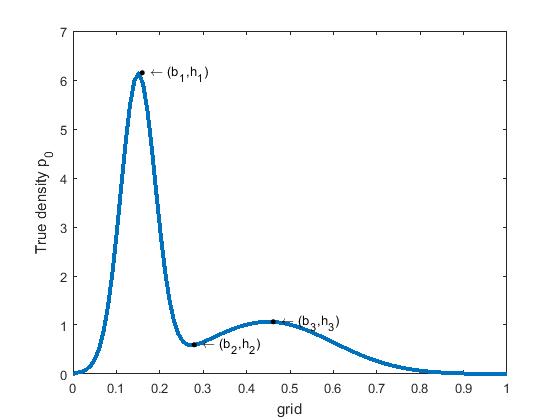} &
\includegraphics[width=0.44\linewidth]{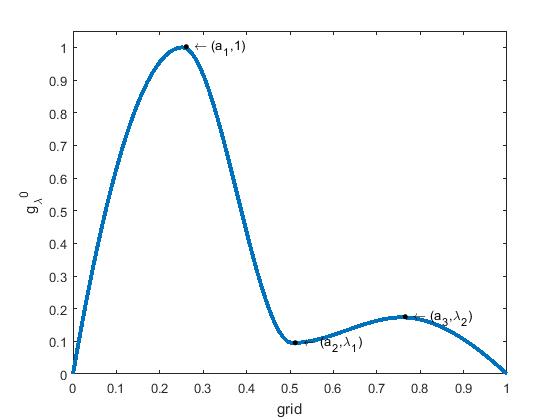}\\
\includegraphics[width=0.44\linewidth]{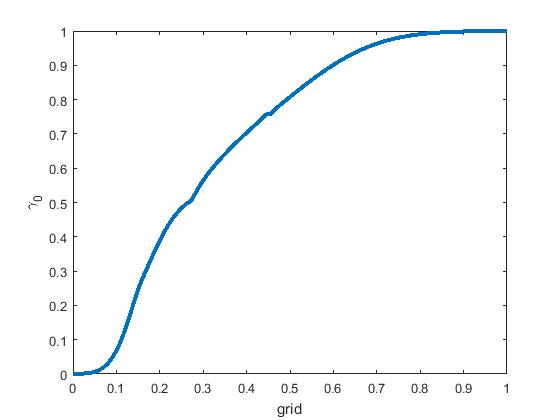} &
\includegraphics[width=0.44\linewidth]{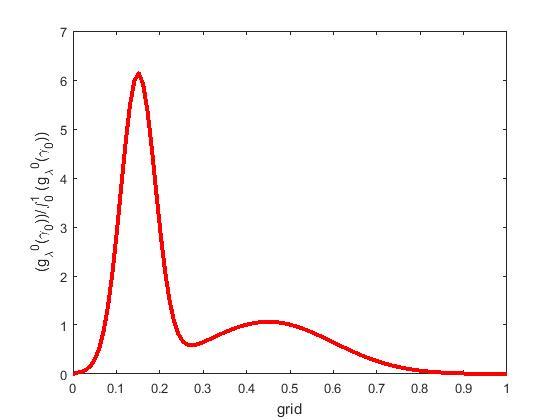}\\
\end{tabular}
\caption{\it The true density(top left) and the constructed template(top right) is shown. The constructed $\gamma_0$ is shown in bottom left panel and the (exact) reconstructed density using the $\gamma_0$ is shown in the bottom right panel.}
\label{eg}
\end{center}
\end{figure}

\section{Estimation of the parameters}
\label{sec:estimationoftheparameters}

In practice, we have to estimate the critical point height ratios $\lambda_i$'s and the warping function $\gamma_0$. We exploit the geometry of the set $\Gamma$ to estimate the desired element $\gamma_0 \in \Gamma$. 

\subsection{Finite-Dimensional Representation of Warping Functions} 
\label{sec:finitedimensionalrepresentation}

Solving an optimization problem, say maximum-likelihood estimation, over $\Gamma$ faces two main challenges. 
First, $\Gamma$ is a nonlinear manifold, and second, it is infinite-dimensional. We handle the nonlinearity by forming 
a map from $\Gamma$ to a tangent space of the unit Hilbert sphere $\s_{\infty}$ (the tangent space is a vector space), 
and infinite dimensionality by 
selecting a finite-dimensional subspace of this tangent space. Together, these two steps are equivalent to finding a  family of finite-dimensional submanifolds $\Gamma_J$ that can be  {\it flattened} into vector spaces. 
This allows for a  representation of $\gamma$ using orthogonal basis. Once we have a finite-dimensional representation of $\gamma$, we 
can optimize over this representation of $\gamma$ using the maximum-likelihood criterion. 

Define a function $q: [0,1] \to \real$, $q(t) = \sqrt{\dot{\gamma}(t)}$, as the square-root slope function (SRSF)
of a $\gamma \in \Gamma$. (For a discussion on SRSFs of general functions, please refer to Chapter 4 of 
\citet{srivastava2016functional}). 
For any $\gamma \in \Gamma$, its SRSF $q$ is an element of the nonnegative orthant of the unit Hilbert sphere, $\s_{\infty} \subset \ltwo$, 
denoted by  $\s_{\infty}^+$. 
This is because
$\|q\|^2 = \int_0^1 q(t)^2 dt = \int_0^1 \dot{\gamma}(t) dt = \gamma(1) - \gamma(0) = 1$.
We have the nonnegative orthant because by definition, $q$ is a nonnegative function. The mapping between $\Gamma$ and 
$\s_{\infty}^+$ is a bijection, with its inverse given by $\gamma(t) = \int_0^t q(s)^2 ds$. 
The unit Hilbert sphere is a smooth manifold with known geometry under the $\ltwo$ Riemannian metric 
\citet{lang2012fundamentals}. It is not a vector space but a manifold with a constant curvature, and can be easily flattened 
into a vector space locally. The chosen vector space is a tangent space of $\s_{\infty}^+$.
A natural choice for reference, to select the tangent space, is the point ${\bf 1} \in \s_{\infty}^+$, 
a constant function with value $1$, which is the SRSF corresponding to $\gamma=\gamma_{\mathrm{id}}(t) = t.$
The tangent space of $\s_{\infty}^{+}$ at ${\bf 1}$
is an infinite-dimensional vector space given by: 
$T_{{\bf 1}}(\s_{\infty}^+) = \{ v \in \ltwo([0,1],\real) | \int_0^1 v(t) dt = \inner{v}{{\bf 1}}= 0\}$.

Next, we define a mapping that takes an arbitrary element of $\s_{\infty}^+$ to this tangent space. For this {\it retraction}, 
we will use the inverse
exponential map that takes $q \in \s_{\infty}^+$ to $T_{{\bf 1}}(\s_{\infty}^+)$  according to:  
\begin{equation}
\exp^{-1}_{{\bf 1}} (q) :\s_{\infty}^+ \to T_{\bf 1}(\s_{\infty}^+),\ \ \ 
v= \exp^{-1}_{{\bf 1}} (q)= {\theta \over \sin(\theta)} (q- {\bf 1} \cos(\theta ))\ ,
 \end{equation} 
 where $\theta =\cos^{-1}(\inner{{\bf 1}}{q})$ is the arc-length from $q$ to ${\bf 1}$. 

We impose a natural Hilbert structure on 
$T_{{\bf 1}}(\s_{\infty}^+)$ using
 the standard inner product: $\inner{v_1}{v_2} = \int_0^1 v_1(t)v_2(t) dt$. 
Further, we can select any orthogonal basis  ${\cal B} = \{ B_j, j=1,2,\dots\}$ of the set $T_{{\bf 1}}(\s_{\infty}^+) $ to express
its elements $v$ by their corresponding coefficients; 
that is,  $v(t) = \sum_{j=1}^{\infty} c_j B_j(t)$, where $c_j= \inner{v}{B_j}$. The only restriction on the basis elements $B_j$'s is that they must be orthogonal to {\bf 1}, that is, $\inner{B_j}{{\bf 1}}=0$. 
In order to map points back from the tangent space to the Hilbert sphere, we use the exponential map, 
given by: 
\begin{equation}
\exp (v) :T_{\bf 1}(\s_{\infty}^+) \to\s_{\infty} ,\ \ \ 
 \exp(v)= \cos(\|v\|){\bf 1} + {\sin(\|v\|) \over \|v\|}\ .
 \end{equation} 
 We define a composite map $H: \Gamma \to \real^J$, as
\begin{equation}
\gamma  \in \Gamma ~~~~~\xrightarrow{\mbox{SRSF}} ~~~~~q = \sqrt{\dot{\gamma}}  \in \s_{\infty}^+ ~~~~~~~
\xrightarrow{\exp^{-1}_{\bf 1}} ~~~~
v 
\in T_{{\bf 1}}(\s_{\infty}^+)  ~~~~ \xrightarrow{ \{B_j\}} ~~~~ 
\{c_j = \inner{v}{B_j} \} \in \real^J \ .
\label{eq:representation}
\end{equation}
 Now, we define $G:\real^J \to \Gamma$, as
\begin{equation}
\{c_j\} \in \real^J \xrightarrow{\{B_j\}} ~ v = \sum_{j=1}^J c_j B_j  \in T_{{\bf 1}}(\s_{\infty}^+)~ \xrightarrow{\exp_{\bf 1}} ~q = 
\exp_{\bf 1}(v) \in \s_{\infty} ~\xrightarrow{~}~   \gamma(t) = \int_0^t q(s)^2 ds\ .
\end{equation}
This map allows us to express an element $\gamma \in \Gamma$ in terms of the coefficient vector $c$. Note that $G$ is not exactly $H^{-1}$ since the range of the exponential map is the entire Hilbert sphere, and not restricted to the nonnegative orthant. We can restrict the domain of $G$ to $V_\pi^J = \{c \in \real^J : \| \sum_{j=1}^{J} c_j B_j\| \leq 2\pi \} \subset \real^J$. Figure \ref{fig:cartoon3} illustrates the map pictorially.

For any $c \in V_{\pi}^J$, let $\gamma_c$ denote the diffeomorphism $G(c)$.
For any fixed $J$, the set $G(V_{\pi}^J)$ is a finite-dimensional submanifold of  $\Gamma$, on which we pose the estimation 
problem.
As $J$ goes to infinity, $G(V_{\pi}^J)$ converges to the set $\Gamma$.

\begin{figure}
\begin{center}
\includegraphics[height=2.78in]{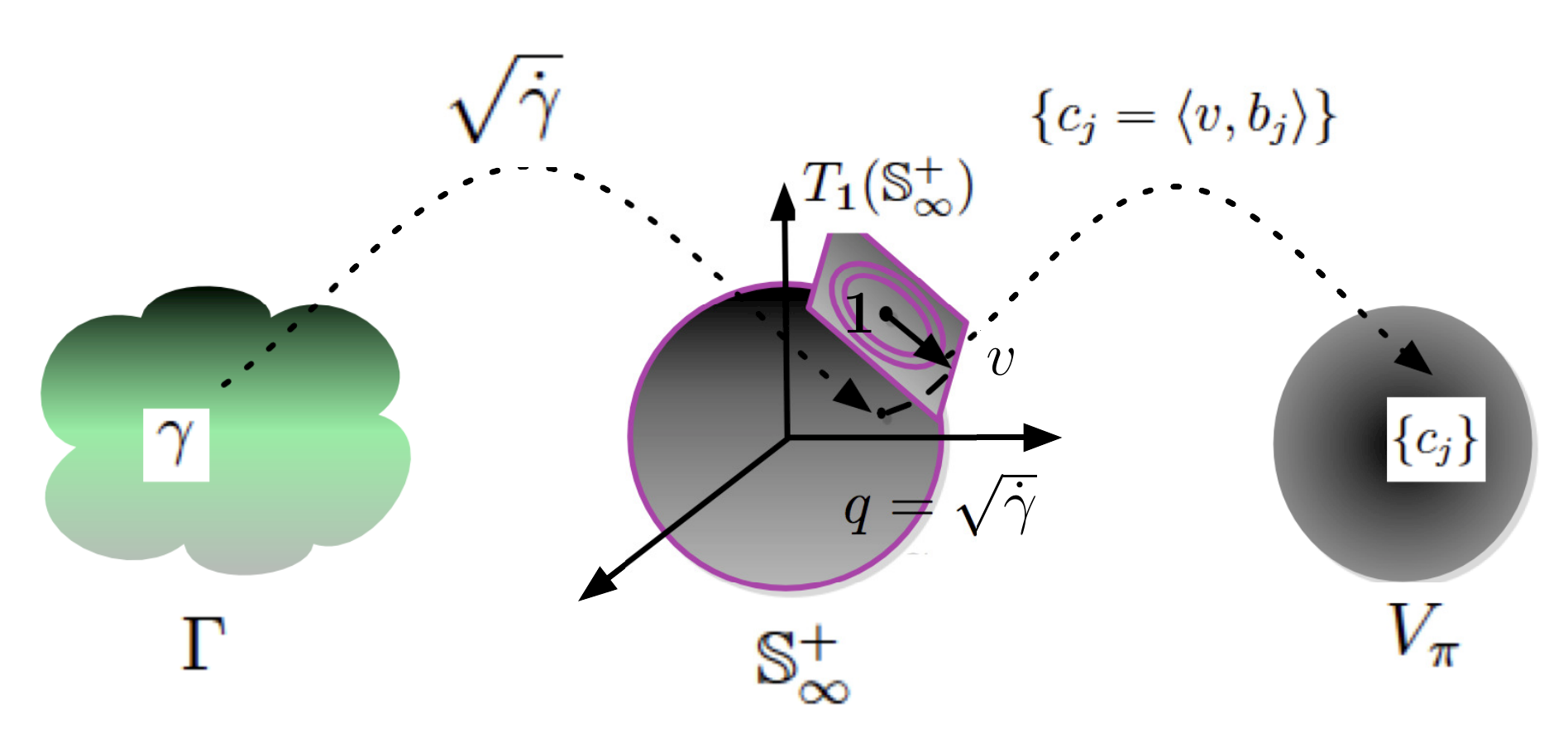}
\caption{A graphic illustration of the finite dimensional representation of elements of $\Gamma$ in terms of the elements $v$ of the tangent space of the Hilbert sphere through the coefficients $c_j$ of the orthogonal basis representation of $v$.}
\label{fig:cartoon3}
\end{center}
\end{figure}

\subsection{Estimation of the $\lambda_i$s and Implementation}
\label{sec:estimationofthelambdai}

We use a joint maximum likelihood method to estimate the height ratios $\lambda_i$s along with the optimal coefficients corresponding to the estimate of $\gamma$. Note that when $M=1$, there is no $\lambda$ parameter. For $M>1$, there are $2M-2$ parameters. Among them, the odd indices $\lambda_1, \lambda_3, \cdots,\lambda_{2M-3}$ correspond to the antimodes, and the rest correspond to the modes. Let $\Lambda=\{\lambda \in \real^{{(2M-2)}^+}| \lambda_1<1, \lambda_1<\lambda_2, \lambda_{2j+1} < \lambda_{2j}, \lambda_{2j+1}<\lambda_{2j+2}, j=1,2,\cdots, M-2\}$.  In the setting described above, the maximum likelihood estimate of the underlying density, given the initial template function $g^{\omega}=g_{\lambda}^{\omega}$, is

$\hat{p}(t)= g_{\hat{\lambda}}^{\omega}(\gamma_{\hat{c}} (t))/(\int_0^1 g_{\hat{\lambda}}^{\omega}(\gamma_{\hat{c}} (t))dt), t\in [0,1]$, where $\gamma_{\hat{c}} = G(\hat{c})$ and 
\begin{equation}
(\hat{c},\hat{\lambda}) =  \argmax_{c \in V_{\pi}^J, \lambda \in \Lambda} \left(  \sum_{i=1}^n \Bigg[ \log \bigg(g_{\lambda}^{\omega} \left(\gamma_c(x_i)\right)/\int_0^1 \left(g_{\lambda}^{\omega} \left(\gamma_c(t)\right)dt\right)\bigg) \Bigg] \right),\ \ 
\gamma_c = G(c)\ .
\label{eq:opt}
\end{equation}

\section{Asymptotic Convergence Results}
\label{sec:asymptoticconvergence}

In this section, we derive the asymptotic convergence rate of the (maximum likelihood) density estimate $\hat{p}$ described according to \eqref{eq:opt} in Section \ref{sec:estimationofthelambdai} to the true underlying density $p_0$ by using the theory of sieve maximum likelihood estimation as in  \citet{wong1995probability}. Let $\mathscr{F}$ denote the set of $M$-modal continuous densities on $[0,1]$ strictly positive in $(0,1)$ and zero at the boundaries.

\begin{itemize}
\item
Assumption 1: $p_0: [0,1] \rightarrow \mathbb{R}^{\geq0}$ is continuous, strictly positive on $(0,1)$, and $p_0(0)=p_0(1)=0$.
\item
Assumption 2: $p_0$ has $M$ modes which lie in $(0,1)$.
\item
Assumption 3: $p_0$ either belongs to H\"{o}lder or Sobolev space of order $\beta$.
\end{itemize}
Let $n$ be the number of available observations. Let $\eta_n$ be a sequence of positive numbers converging to 0. Let $Z_i$ be the of $n$ observed data points scaled to the unit interval. We call an estimator $\hat{p}:[0,1] \rightarrow  \mathscr{F}_n$ an $\eta_n$ sieve MLE if
\[
\frac{1}{n}\sum_{i=1}^{n} \log \hat{p}(Z_i) \geq  \underset{p \in \mathscr{F}_n}{\text{sup}} \frac{1}{n}\sum_{i=1}^{n}\log p(Z_i) -\eta_n
\]
In the proposed method, $\hat{p}$ is defined such that $\frac{1}{n}\sum_{i=1}^{n} \log \hat{p}(Z_i)$   is exactly $ \underset{p \in \mathscr{F}_n}{\text{sup}} \frac{1}{n}\sum_{i=1}^{n}\log p(Z_i)$. 
Therefore, $\hat{p}$ is a sieve MLE with $\eta_n \equiv 0$. Let  $\norm{\cdot}_r$ denote $\mathbb{L}^r$ norm between functions.  The  following theorem states the asymptotic convergence rate for the sieve MLE $\hat{p}$.

\begin{theorem}
Let $\epsilon_n^* = M_1 n^{-\beta/(2\beta+1)} \sqrt{\log n}$ for some constant $M_1$. If $p_0$ satisfies Assumptions 1, 2 and 3; and $\hat{p}$ is the sieve MLE described according to \eqref{eq:opt} in Section \ref{p0}, then there exists constants $C_1$ and $C_2$ such that
\begin{eqnarray}
 P({\|{\hat{p}}^{1/2}-p_{0}^{1/2}\|}_2 \geq \epsilon_n^* ) \leq 5 \exp\big\{-C_2n{(\epsilon_n^* )}^2\big\} + \exp\bigg\{-\frac{1}{4}n C_1{(\epsilon_n^* )}^2\bigg\}.  
\end{eqnarray}
\label{thm2}
\end{theorem}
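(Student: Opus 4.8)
The plan is to obtain the inequality from the general sieve maximum-likelihood machinery of \citet{wong1995probability}. That framework reduces the problem to two ingredients for the sieve spaces $\mathscr{F}_n = \{ (g_\lambda^\omega\circ\gamma_c)/\!\int g_\lambda^\omega\circ\gamma_c : c\in V_\pi^{J_n},\ \lambda\in\Lambda\}$ (with a suitably chosen truncation level $J_n$): (i) a bound on the bracketing Hellinger metric entropy of $\mathscr{F}_n$, and (ii) an approximation statement placing $p_0$ within Kullback--Leibler-type distance of order $(\epsilon_n^*)^2$ of some $p_n^*\in\mathscr{F}_n$. With both in hand, Theorem~1 of \citet{wong1995probability} yields precisely the displayed bound, tails and all; since $\hat p$ is an $\eta_n$-sieve MLE with $\eta_n\equiv 0$ no extra slack enters, and since $\|\hat p^{1/2}-p_0^{1/2}\|_2=\sqrt2\,h(\hat p,p_0)$ the Hellinger conclusion of that theorem is exactly the asserted inequality (with $C_1,C_2$ absorbing the constant $\sqrt2$).

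For (i), the key point is that $\mathscr{F}_n$ is the image of the subset $V_\pi^{J_n}\times\Lambda$ of $\real^{J_n+2M-2}$ (and, because $\lambda_0$ is fixed and interior to $\Lambda$, one may restrict $\lambda$ to a fixed compact neighbourhood without loss) under the map $(c,\lambda)\mapsto p_{\lambda,\gamma_c}$. I would show this map is Lipschitz in the supremum norm: the composite $G$ of Section~\ref{sec:finitedimensionalrepresentation} (basis synthesis, exponential map at ${\bf 1}$, then $\gamma(t)=\int_0^tq(s)^2\,ds$) is smooth with derivatives bounded uniformly on $V_\pi^{J_n}$, so $c\mapsto\gamma_c$ is Lipschitz in $\|\cdot\|_\infty$, while $\gamma\mapsto p_{\lambda,\gamma}$ and $\lambda\mapsto g_\lambda^\omega$ are Lipschitz since $g_\lambda^\omega$ depends piecewise-linearly on its node heights. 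Using $|\sqrt a-\sqrt b|\le\sqrt{|a-b|}$ to convert sup-norm closeness into Hellinger closeness, covering-number bounds for the Euclidean parameter set transfer to $\log N(u,\mathscr{F}_n,h)\lesssim J_n\log(1/u)$; the bracketing entropy integral is then $\lesssim\epsilon_n\sqrt{J_n\log(1/\epsilon_n)}$, which is $\le c\sqrt n\,\epsilon_n^2$ whenever $J_n\log n\lesssim n\epsilon_n^2$.

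For (ii), Theorem~\ref{gamexist} furnishes a \emph{unique} $\gamma_0\in\Gamma$ with $p_0=(p_1,\gamma_0)$, where $p_1=g_{\lambda_0}^0/\!\int g_{\lambda_0}^0$; it remains to quantify how well the tangent-space representative $v_0=\exp_{\bf 1}^{-1}(\sqrt{\dot\gamma_0})$ is captured by its $J_n$-term truncation $\sum_{j\le J_n}\inner{v_0}{B_j}B_j$. Assumption~3 propagates, via the explicit piecewise inverse formula $\gamma_0(x)=g_t^{-1}(\tilde g(x))$ with $\tilde g=p_0/h_1$, to smoothness of order $\approx\beta$ of $\dot\gamma_0$, hence of $v_0$; for a Fourier or spline basis this gives $\|v_0-\sum_{j\le J_n}c_jB_j\|_2\lesssim J_n^{-\beta}$, which carries through the Lipschitz map $G$ to $\|p_0-p_n^*\|_\infty\lesssim J_n^{-\beta}$ for the corresponding $p_n^*\in\mathscr{F}_n$, and finally to $K(p_0,p_n^*)\lesssim J_n^{-2\beta}$. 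I expect the \emph{main obstacle} here to be the boundary behaviour. Since $p_0$ must vanish at $0$ and $1$ while $p_1$ vanishes only linearly, the warp $\gamma_0$ can have $\dot\gamma_0$ blowing up (or vanishing) at the endpoints, so $v_0$ need not be globally smooth; and the ratio $p_0/p_n^*$ entering both $K(p_0,p_n^*)$ and the second moment $\int p_0(\log(p_0/p_n^*))^2$ required by \citet{wong1995probability} is delicate near the endpoints. Controlling these will require a careful matching of the template's boundary decay to that of $p_0$ (equivalently, a uniform estimate for $\dot\gamma_c$ near $0,1$), or a boundary-truncation argument, to keep both quantities of order $J_n^{-2\beta}$.

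Finally I would balance the two requirements by taking $J_n\asymp n^{1/(2\beta+1)}$: then $K(p_0,p_n^*)\lesssim n^{-2\beta/(2\beta+1)}$ and the entropy condition $J_n\log n\lesssim n(\epsilon_n^*)^2$ both hold at the rate $\epsilon_n^*=M_1n^{-\beta/(2\beta+1)}\sqrt{\log n}$, so the two hypotheses of \citet{wong1995probability}'s Theorem~1 are satisfied and its conclusion is the stated probability bound.
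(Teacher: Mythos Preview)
Your proposal is correct and follows essentially the same route as the paper: a Lipschitz bound from the finite-dimensional parameter $(c,\lambda)$ to the density (the paper's Lemma~\ref{equiv1}), a covering-number entropy estimate on the Euclidean parameter box transferred to Hellinger brackets, a basis-truncation approximation rate $J_n^{-\beta}$ under Assumption~3, and the balance $J_n\asymp n^{1/(2\beta+1)}$, all fed into the Wong--Shen machinery. Two small refinements relative to your sketch: the paper applies Theorem~4 (not Theorem~1) of \citet{wong1995probability} for the final displayed inequality, using the $\rho_1$-discrepancy $\delta_n(1)=\inf_{p\in\mathscr F_n}\int(p_0-p)^2/p$ rather than $K(p_0,p_n^*)$; and the boundary obstacle you flagged is handled exactly by letting the template's endpoint height $\omega=\Omega/\log n$ shrink slowly, so that $\int 1/p\lesssim\log n$ contributes only the logarithmic factor already present in $\epsilon_n^*$.
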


We present the proof of Theorem \ref{thm2} in Appendix \ref{app:thm2}.  The essential idea hinges on proving the equivalence of the density space $\mathscr{F}$ obtained with the  parameter space. That is, we show that if the estimated parameter is ``close" to the true parameter corresponding to the true density in some sense, then the corresponding estimated density is also ``close" to the true density. The statement is formally stated and proved in Lemma \ref{equiv1} in Appendix \ref{app:thm2}. The general theory is then inspired by the convergence of sieve MLE estimators in \citet{wong1995probability}. 
 
\section{Simulation study}
\label{sec:simulationstudy}

For numerical implementation, we use Fourier basis for the tangent space representation and the MATLAB function {\it fmincon} for optimization. The objective function as described in \eqref{eq:opt} is not convex, and hence the inbuilt function {\it fmincon} is used. However {\it fmincon} can often get stuck in local suboptimal solutions and so we use the {\it GlobalSearch} toolbox along with {\it fmincon} to obtain better results. We start with just $2$ basis points for the tangent space representation and we gradually move towards more number of basis elements upto a predecided limit and choose the estimate based on the best AIC value. AIC was chosen as the penalty on the number of basis elements because experiments suggests that BIC overpenalizes the number of parameter terms which often caused the estimate to miss the sharper features of the true density.


For illustration, we consider sample sizes $100$, $500$ and $1000$. To evaluate the average performances we generate $100$ samples (of sample size $100$, $500$ and $1000$ respectively) and evaluate the mean error and the standard deviation of the errors. For error function we have considered $\ltwo$, $\mathbb{L}^1$ and $\mathbb{L}^{\infty}$. As a first part of the experiment, we generate from three examples with the constraint that the number of modes is one. For comparison, we use the \texttt{umd} packge developed by \cite{Turnbull2014-ag}. In Figure \ref{fig:simulated} we illustrate the best, median and worst performance out of the $100$ samples based on the $\ltwo$ loss function for sample size $100$ for the warped method(top row) and the \texttt{umd} package(bottom row). The examples are as below:

\begin{enumerate}
\item
$p_0=4/5\mathcal{N}(0,4) + 1/5\mathcal{N}(0,0.5)$- a symmetric unimodal example.

\item
$p_0=Beta(9,3)$- a skewed unimodal density with $A=0$, $B=1$ assumed known as well.

\item
$p_0=0.95\mathcal{N}(0,0.5) + 0.05\mathcal{N}(3,1)$-An example of unimodal contaminated data.  

\end{enumerate}

For the symmetric unimodal example, the warped method captures the sharp peak better than the \texttt{umd} method. In the contaminated data example the \texttt{umd} solver gets stuck in a suboptimal solution in one isolated case. However,in the Beta density example the \texttt{umd} method performs better. The quantitative analysis is presented in Table \ref{tablede1}. As a second part of the simulation study we provide two examples with number of modes constrained to be $2$ and $3$ respectively, $(1) p_0=1/3 \mathcal{N}(-1,1) + 2/3\mathcal{N}(1,0.3)$-an asymmetric bimodal example, and $(2) p_0=1/3\mathcal{N}(-1,0.25) + 1/3 \mathcal{N}(0,0.25) + 1/3 \mathcal{N}(2,0.3)$-an asymmetric trimodal example with  one mode well separated from the other two modes. In Figure \ref{fig:simulated2} top row we illustrate the median, best and worst performance out of $100$ samples of size $100$ for the two examples. In Table \ref{tablede2} we present the quantitative performance analysis. 

One important observation is that the proposed method has much higher computation cost compared to the competitors because of the {\it GlobalSearch} toolbox used. For the symmetric unimodal example, the numerical performance with or without using the {\it GlobalSearch} toolbox is very similar, and hence the performance is presented without using the {\it GlobalSearch} toolbox to illustrate the difference in computation cost. For all other examples, {\it GlobalSearch} toolbox is used.
 \begin{figure}[t!]
\begin{center}
\begin{tabular}{ccc}
\includegraphics[width=0.33\linewidth]{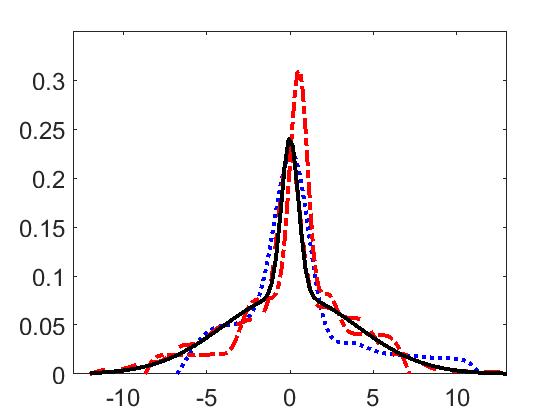} &
\includegraphics[width=0.33\linewidth]{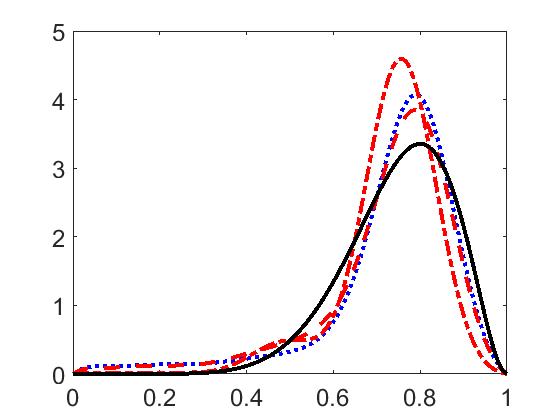}&
\includegraphics[width=0.33\linewidth]{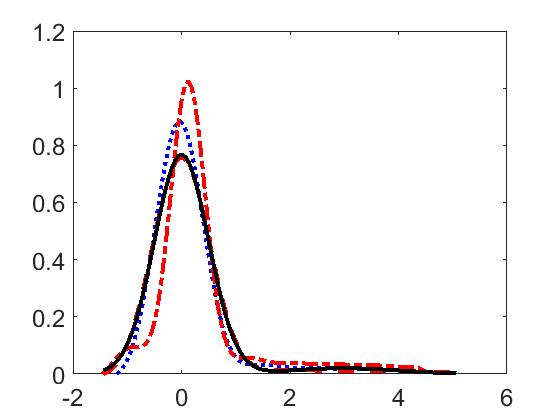}\\
\includegraphics[width=0.33\linewidth]{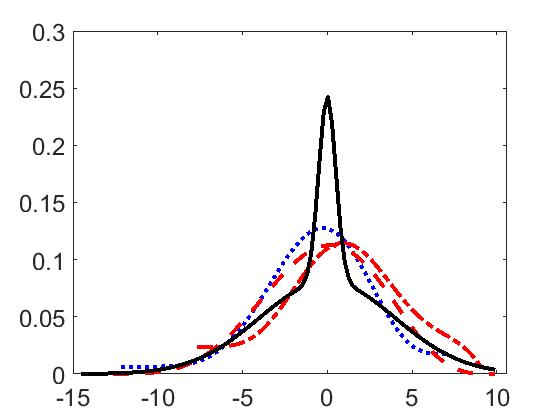} &
\includegraphics[width=0.33\linewidth]{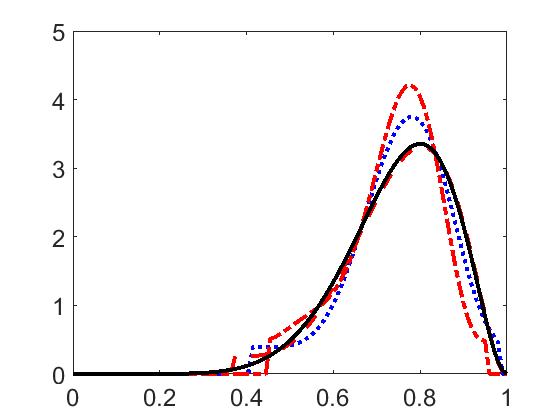}&
\includegraphics[width=0.33\linewidth]{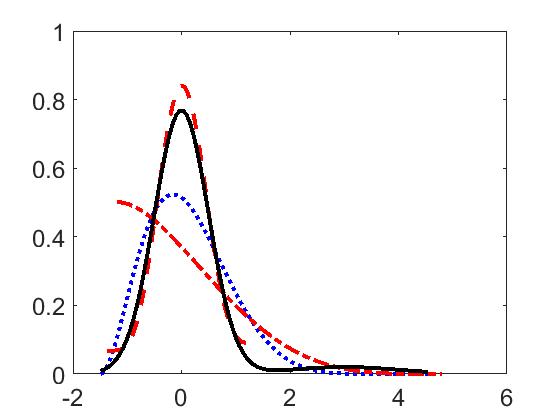}\\
\end{tabular}
\caption{\it The figure illustrates the true density(solid line) ;the estimated density with best performance(dashed line); the median performance(dotted line) and the worst performance(dashed-dotted line) according to $L_2$ norm. The panels correspond to the four simulated examples in order from top left to bottom right. }
\label{fig:simulated}
\end{center}
\end{figure}

\begin{table}[htbp] 
\caption{\it A comparison of the performances of \texttt{umd} package and Warped estimate for simulated unimodal examples.}
\label{tablede1}\par
\vskip .2cm
\centering
\begin{tabular}{|r|r|c|rrr|rrr|} \hline 
\multirow{2}{*}{Example} & \multicolumn{2}{c}{Method:} & \multicolumn{3}{c|}{Warped Estimate} & \multicolumn{3}{c|}{\texttt{umd} Estimate}\\ \cline{2-9}
& \multicolumn{1}{c}{$n$} &  \multicolumn{1}{c}{Norm}
           & \multicolumn{1}{c}{Mean} & \multicolumn{1}{c}{std.dev} & \multicolumn{1}{c|}{Time} & \multicolumn{1}{c}{Mean} & \multicolumn{1}{c}{std.dev} & \multicolumn{1}{c|}{Time} \\
 \hline
\multirow{9}{*}{Symmetric Unimodal} & \multirow{ 3}{*}{100} & $\lone$ &  1.1933 & 0.3038 &    & 1.5753 & 0.2202 & \\
& &  $\ltwo$ & 0.1898 & 0.0568 & $11$ sec & 0.2791 & 0.0138 & $1$ sec \\
& &  $\linf$ &  0.0755 & 0.0299 &  & 0.1243 & 0.0099 & \\ \cline{2-9} 
& \multirow{ 3}{*}{500} & $\lone$ &  0.5746 & 0.1131 &    & 1.1948 & 0.1109 & \\
& &  $\ltwo$ & 0.0953 & 0.0248 & $23$ sec & 0.2289 & 0.0050 & $1$ sec \\
& &  $\linf$ &  0.0409 & 0.0149 &  & 0.1109 & 0.0063 & \\ \cline{2-9}
 & \multirow{ 3}{*}{1000} & $\lone$ &  0.4786 & 0.2905 &    & 1.1325 & 0.0629 &  \\
& &  $\ltwo$ &  0.0834 & 0.0642 & $31$ sec & 0.2238 & 0.0036 & $1$ sec \\
& &  $\linf$ & 0.0371 & 0.3376 &  & 0.1117 & 0.0052 &  \\ \hline
\multirow{9}{*}{Skewed Unimodal} & \multirow{ 3}{*}{100} & $\lone$ &  19.4054 & 4.3991 &   & 14.0244 & 4.7563 &  \\
& &  $\ltwo$ &  2.9589 & 0.7715 & $305$ sec & 2.1081 & 0.7414 & $1$ sec \\
& &  $\linf$ &  0.8517 & 0.2914 &   & 0.5668 & 0.2074 & \\ \cline{2-9}
& \multirow{ 3}{*}{500} & $\lone$ &  12.9066 & 3.0470 &    & 7.6131 & 2.3679 & \\
& &  $\ltwo$ & 1.9930 & 0.5267 & $259$ sec & 1.1735 & 0.3838 & $1$ sec \\
& &  $\linf$ &  0.5866 & 0.1832 &  & 0.3294 & 0.1141 & \\ \cline{2-9}
 & \multirow{ 3}{*}{1000} & $\lone$ &  12.0474 & 2.5418 & & 5.6584 & 1.4165 &   \\
& &  $\ltwo$ & 1.8592 & 0.4427 & $341$ sec & 0.8779 & 0.2370 & $1$ sec\\
& &  $\linf$ &  0.5485 & 0.1765 &  & 0.2485 & 0.0732 &
\\ \hline
\multirow{9}{*}{Contaminated Unimodal} & \multirow{ 3}{*}{100} & $\lone$ &  3.0600 & 1.5574 &  & 6.6567 & 1.4372 & \\
& &  $\ltwo$ &  0.4385 & 0.2258 & $277$ sec & 0.9532 & 0.2374 & $1$ sec   \\
& &  $\linf$ &  0.1136 & 0.0628 &  & 0.1455 & 0.0538 &  \\ \cline{2-9}
& \multirow{ 3}{*}{500} & $\lone$ &  1.2348 & 0.5206 &    & 3.4151 & 0.8655 & \\
& &  $\ltwo$ & 0.1893 & 0.0879 & $301$ sec & 0.5106 & 0.1568 & $1$ sec \\
& &  $\linf$ &  0.0510 & 0.0268 &  & 0.1455 & 0.0538 & \\ \cline{2-9}
 & \multirow{ 3}{*}{1000} & $\lone$ &  0.8319 & 0.3172 &   & 3.1453 & 0.8934 &  \\
& &  $\ltwo$ & 0.1247 & 0.0563 & $301$ sec & 0.4616 & 0.0879 & $1$ sec\\
& &  $\linf$ &  0.0363 & 0.1277 &  & 0.0502 & 0.0538 &
\\ \hline
\end{tabular}
\end{table}

\begin{table}[htbp] 
\caption{\it A quantitative analysis of the performance of Warped Estimate for simulated bimodal and trimodal dataset.}
\label{tablede2}\par
\vskip .2cm
\centering
\begin{tabular}{|r|c|rrr|rrr|} \hline 
  \multicolumn{2}{|c|}{Example:} & \multicolumn{3}{c|}{Bimodal density} & \multicolumn{3}{c|}{Trimodal density}\\ \cline{1-8}
 \multicolumn{1}{|c}{$n$} &  \multicolumn{1}{c|}{Norm}
           & \multicolumn{1}{c}{Mean} & \multicolumn{1}{c}{std.dev} & \multicolumn{1}{c|}{Time} & \multicolumn{1}{c}{Mean} & \multicolumn{1}{c}{std.dev} & \multicolumn{1}{c|}{Time} \\
 \hline
 \multirow{ 3}{*}{100} & $\lone$ &  4.3429 & 1.2332 &    & 6.7299 & 1.6367 & \\
 &  $\ltwo$ & 0.6575  & 0.2049 & $125$ sec & 0.9075 & 0.2344 & $105$ sec \\
 &  $\linf$ &  0.2089 & 0.0850 &  & 0.2419 & 0.0867 & \\ \cline{1-8} 
 \multirow{ 3}{*}{500} & $\lone$ &  2.4727 & 0.5755 &    & 3.4841 & 1.2778 & \\
 &  $\ltwo$ & 0.3839 & 0.1103 & $143$ sec & 0.4816 & 0.1737 & $131$ sec \\
 &  $\linf$ &  0.1337 & 0.0502 &  & 0.1351 & 0.0538 & \\ \cline{1-8}
  \multirow{ 3}{*}{1000} & $\lone$ & 1.9942 & 0.5042 &    &  3.0489 & 1.7033 &  \\
 &  $\ltwo$ &  0.3100 & 0.0999 & $185$ sec & 0.4330 & 0.2353 & $311$ sec \\
 &  $\linf$ & 0.1095 & 0.0444 &  &  0.1246 & 0.0648 &  \\ \hline
\end{tabular}
\end{table}

\begin{table}[htbp] 
\caption{\it A quantitative analysis of the performance of Warped Estimate for simulated bimodal and trimodal dataset.}
\label{tablede3}\par
\vskip .2cm
\centering
\begin{tabular}{|r|c|rrr|rrr|} \hline 
  \multicolumn{2}{|c|}{$p_0\sim \mathcal{N}(0,1)I_{[0,1]}$} & \multicolumn{3}{c|}{EpisplineDensity estimate} & \multicolumn{3}{c|}{Warped Estimate}\\ \cline{1-8}
 \multicolumn{1}{|c}{$n$} &  \multicolumn{1}{c|}{Norm}
           & \multicolumn{1}{c}{Mean} & \multicolumn{1}{c}{std.dev} & \multicolumn{1}{c|}{Time} & \multicolumn{1}{c}{Mean} & \multicolumn{1}{c}{std.dev} & \multicolumn{1}{c|}{Time} \\
 \hline
 \multirow{ 3}{*}{500} & $\lone$ &  8.7334 & 2.1415 &    & 5.9202 & 2.8516 & \\
 &  $\ltwo$ & 1.3269  & 0.5033 & $6$ sec & 0.7079 & 0.3346 & $180$ sec \\
 &  $\linf$ &  0.6167 & 0.4461 &  & 0.1538 & 0.0758 & \\ \cline{1-8} 
\end{tabular}
\end{table}

 \begin{figure}[t!]
\begin{center}
\begin{tabular}{cc}
\includegraphics[width=0.5\linewidth]{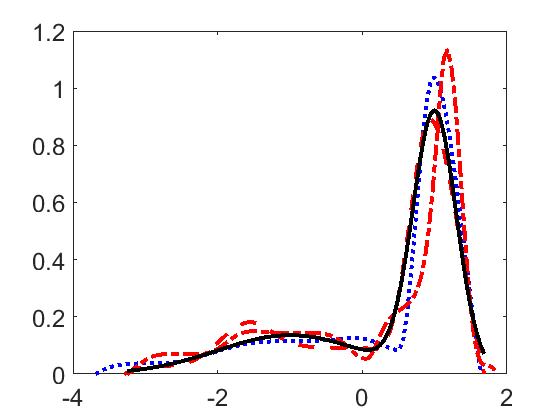} &
\includegraphics[width=0.5\linewidth]{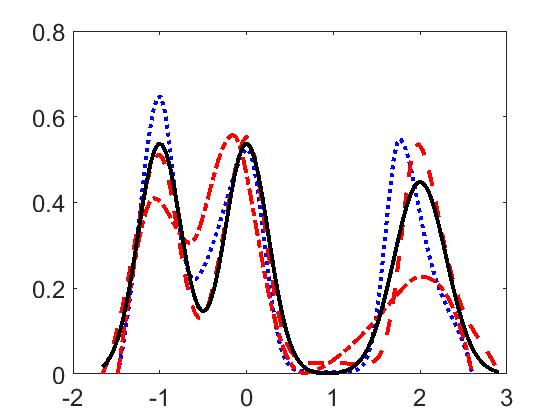}\\
\includegraphics[width=0.5\linewidth]{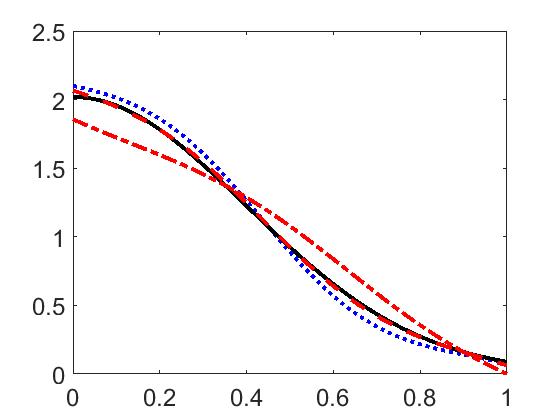} &
\includegraphics[width=0.5\linewidth]{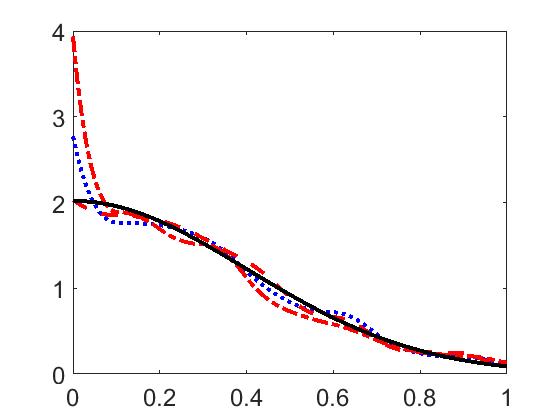} \\
\end{tabular}
\caption{\it The figure illustrates the true density(solid line) ;the estimated density with best performance(dashed line); the median performance(dotted line) and the worst performance(dashed-dotted line) according to $L_2$ norm. The panels correspond to the four simulated examples in order from top left to bottom right. }
\label{fig:simulated2}
\end{center}
\end{figure}

\section{Extension to more general shapes}
\label{sec:extensiontomoregeneralshapes}

Upto this point we have restricted ourselves to density estimates which are zero at the boundary even though the true density might not be exactly zero. Also the estimation has inherently assumed that the $M$ modes lie in the interior of the support and not on the boundary. As indicated in the simulation studies, the method has very good numerical performance for densities which decay at the boundaries. However, the proposed framework allows a easy extention to densities which may have (1) modes located at the boundaries, (2) compact support with significantly large value at the boundaries, by simply considering the height ratios at the boundaries as extra parameters. Essentially this extension requires knowledge of the exact sequence of the modes and antimodes in order to construct the correct function template $g_{\lambda}$ and the correct constraints for the parameters $\lambda_i$. (Note that previously we had indexed the template by $g_{\lambda}^{\omega}$ and we fixed the boundary values of $g$ to be $\omega$). For example, for an $N$-shaped density, we need the knowledge that the function is initially increasing,then decreasing and finally increasing, and hence we can create an $N$ shaped template. Once the template is constructed the rest of the procedure remains the same. Another special example are monotone densities, where the mode is at one of the boundaries. In such a scenario, one can construct the template by setting the modal value of $g$ to be $1$ and estimate the other boundary value $\lambda_1$ with appropriate constraint. The bottom row of Figure \ref{fig:simulated2} considers an example of a monotonically decreasing density, a $\mathcal{N}(0,0.4)$ truncated to $[0,1]$. As a comparison we have used the \texttt{episplineDensity} package and have considered $100$ samples of sample size $500$. The bottom left panel of Figure \ref{fig:simulated2} shows the best, median and worst performance out of the $100$ samples for the warped estimate. The right panel shows the same for the \texttt{episplinedensity} estimate. The performance of the warped estimate is better overall, and especially at the left boundary. Table \ref{tablede3} presents the quantitative comparison of the performances.

Finally suppose a density has a flat spot at a modal (or antimodal) location. This indicates that the modes are not well defined but is actually an interval. The framework theoretically accomodates such an information by simply adding a flat spot in the template function at the desired location. 
Thus, we can extend the idea of ``shape" of a continuous density function to be identified as an ordered sequence of increasing, decreasing or flat pieces that form the entire density function. For example, a simple bimodal density function can be identified by the sequence {\it increasing-decreasing-increasing-decreasing}. A function with a unique modal interval can be described as {\it increasing-flat-decreasing}. If this sequence is known, then simply constructing a template with the same sequence allows us to provide a maximum likelihood density estimate within the class of densities satisfying that shape sequence.

 \begin{figure}[htbp]
\begin{center}
\begin{tabular}{cc}
\includegraphics[width=0.5\linewidth]{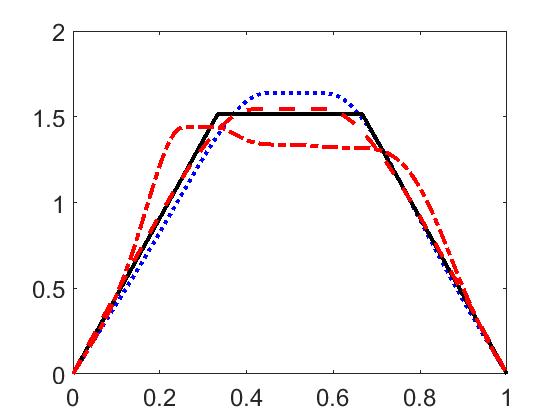} &
\includegraphics[width=0.5\linewidth]{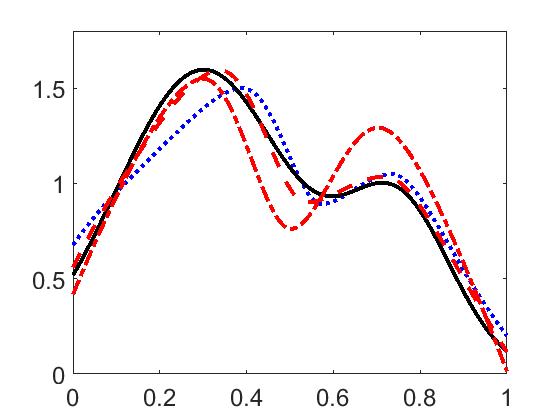}\\
\end{tabular}
\caption{\it The figure illustrates the true density(solid line) and the estimated density with median performance(dotted line); the best performance (dashed line) and the worst performance(dashed dotted line) for two examples with estimates unconstrained at the boundaries.}
\label{sde}
\end{center}
\end{figure}

In practice, we have used MATLAB function {\it fmincon} for optimization purposes. However, estimating the correct height at the boundaries takes a large sample size using the {\it fmincon} implementation to achieve a satisfactory and stable performance. Figure \ref{sde}  shows two examples, 
\begin{enumerate}
\item
$p_0 \propto xI_{x \in [0,1/3]} + 1/3I_{x \in [1/3,2/3]} +(1-x)I_{x \in [2/3,1]}$ and zero otherwise - A  density function with a flat modal region.

\item
$p_0 \propto 3/4\mathcal{N}(0.3,{0.2}^2)I_{[0,1]} + 1/4\mathcal{N}(0.75,{2}^2)I_{[0,1]}$ - A bimodal density function truncated to $[0,1]$.
\end{enumerate}

The left panel of Figure \ref{sde} shows the best, median and worst performance out of $100$ samples of size $500$ from the density with flat spot. The right panel shows the same from sample size $1000$ for the truncated bimodal density.

\section{Extension to conditional density estimation}
\label{sec:extensiontoconditional}

The proposed framework for modality constrained density estimation extends naturally to modality constrained conditional density estimation setups.  
Consider the following setup: Let $X$ be a fixed one-dimensional random variable with a positive density on its support. 
Let $Y \sim  f_{tX}(m(X),\sigma_X^2)$, where $f_{tX}$ is the unknown conditional density that changes smoothly with $X$;
$m(X)$ is the unknown mean function, assumed to be smooth; and, $\sigma_X^2$ is the unknown variance, 
which may or may not depend on $X$. Conditioned on $X$, $Y$ is assumed to have a univariate, continuous distribution with  support on 
interval $[A,B]$, has a known $M$ modes in the interior of $[A,B]$, and $f_{tX}(A)=f_{tX}(B)=0$.
 We observe the pairs 
$(Y_i,X_i),i=1,\dots,n$, and are interested in recovering the conditional density $f_{tX}(m(X),\sigma^2)$ at a particular location of $X$, henceforth referred to as $x_0$.
The estimation is again initialized with an $M$ modal template function $g_{\lambda}^{\omega}$. However, since $f_{tX}$ varies smoothly with $X$, we assign more importance to observations closer to the location $x_0$ than observations that are further away, and hence, we perform weighted maximum likelihood function to estimate the necessary parameters.
\begin{equation}
(\hat{c}_{x_0},\hat{\lambda}_{x_0}) =  \argmax_{c \in V_{\pi}^J, \lambda \in \Lambda} \left(  \sum_{i=1}^n \Bigg[ \log \bigg(g_{\lambda}^{\omega} \left(\gamma_c(x_i)\right)/\int_0^1 \left(g_{\lambda}^{\omega} \left(\gamma_c(t)\right)dt\right)\bigg) \Bigg] W_{x_0,i}\right),\ \ 
\gamma_c = G(c)\ .
\label{eq:cdeopt}
\end{equation}

where $W_{x_0,i}$ is the localized weight associated with the $i$th observation, calculated according to:
$$
W_{x_0,i}=\frac{{\cal N}({\|X_i-x_0\|}_2/h(x_0);0,1)}{\sum_{j=1}^{n}{\cal N}({\|X_j-x_0\|}_2/h(x_0);0,1)}
$$
where ${\cal N}(\cdot,0,1)$ is the standard normal pdf and $h(x_0)$ is the parameter that controls the relative weights associated with the observations.  However, weights defined in this way results in higher bias because information is being borrowed from all observations. As discussed in an example in \citet{bashtannyk2001bandwidth}, we allow only a specified fraction of the observations $X_i$ to have a positive weight. However, using too small a fraction will result in unstable estimates and poor practical performance because the effective sample size will be too small. Hence we advocate using the nearest $50\%$ of the observations (nearest to the target location) for borrowing information and then calculating the weights for this smaller sample as defined before.
The parameter $h(x_0)$ is akin to the bandwidth parameter associated with traditional kernel methods for density estimation, for the predictors $X$. 
A very large value of $h(x_0)$ distributes approximately equal weight to all the observations, whereas a
very small value considers only the observations in a neighborhood around $x_0$. 
The parameter $h(x_0)$ can be chosen via any standard cross validation based bandwidth selection method, for practical purposes. For our purposes we use an adaptive bandwidth selection method to save computation time, when the predictors are independent of each other:

The parameter $h(x_0)$ is chosen according 
to the location $x_0$ using a two-step procedure:
\begin{enumerate}
\item
Compute a standard kernel density estimate $\hat{K}$ of the predictor space using a fixed bandwidth chosen according to any standard criterion. 
For our purposes, we simply used the \texttt{ksdensity} estimate inbuilt in MATLAB which chooses the bandwidth optimal for normal densities. Let $h$ be the fixed bandwidth used.

\item
Then, set the bandwidth parameter $h( x_0)$ at location $x_0$ to be $h(x_0)=h/\sqrt{\hat{K}(x_0)}$.
 
\end{enumerate}

The intuition is that $h$ controls the overall smoothing of the predictor space based on the sample points, and the $\sqrt{\hat{K}(x_0)}$ stretches or shrinks the bandwidth at the particular location. At a sparse region, increased borrowing of information from the other data points is desirable in order to reduce the variance of the estimate, whereas in dense regions a
reduced borrowing of information from far away points reduces the bias of the density estimates. A location from a sparse region is expected to have a low density estimate, and a location from a dense region is expected to have a high density estimate. Hence, varying the bandwidth parameter inversely with the density estimate helps adapt to the sparsity around the point of interest. The choice of the adaptive bandwidth parameter is motivated from the variable bandwidth kernel density estimators discussed in \citet{terrell1992variable}, \citet{van2003adaptive} and \citet{abramson1982bandwidth}, among others. 
%
%

As illustrative examples we consider two setups: (1)$X\sim \mathcal{N}(0,1),Y|X \sim DExp( {(2X-1)}^2,1)$, a unimodal conditional density and (2) $X\sim \mathcal{N}(0,1),Y|X\sim  0.5\mathcal{N}(X-1.5,{0.5}^2) + 0.5 \mathcal{N}(X+1.5,{0.5}^2)$, a bimodal conditional density. In both cases we study $100$ samples of size $100$ and $1000$ and compute the conditional density at the $25$th, $50$th and $75$th quantile of the predictor support.  Figure \ref{fig:simulatedcde} illustrates the best, worst and the median performance among the $100$ samples in each scenario. For sample size $100$ (first and third row), the performance is slightly unstable and the worst performances often has a bias and is wiggly in nature. Naturally for larger sample size $1000$ (second and fourth row), the results are much more stable. Also noteworthy is the more pronounced bias for the conditional densities evaluated at the $25$th and $75$th quantiles because of borrowing of information via weighted likelihood estimation. However, the bias is almost absent for sample size $1000$. The quantitative performance based on average loss functions is presented in Table \ref{tablecde1}.

 \begin{figure}[t!]
\begin{center}
\begin{tabular}{ccc}
\includegraphics[width=0.33\linewidth]{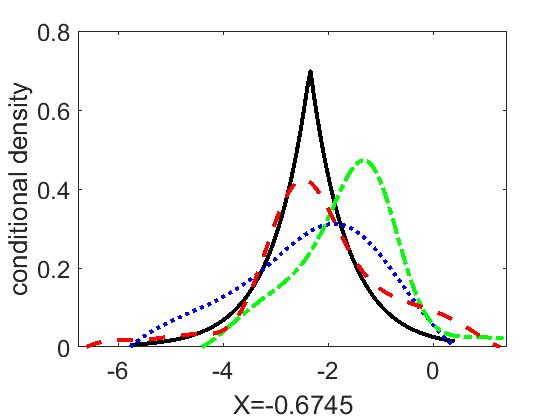} &
\includegraphics[width=0.33\linewidth]{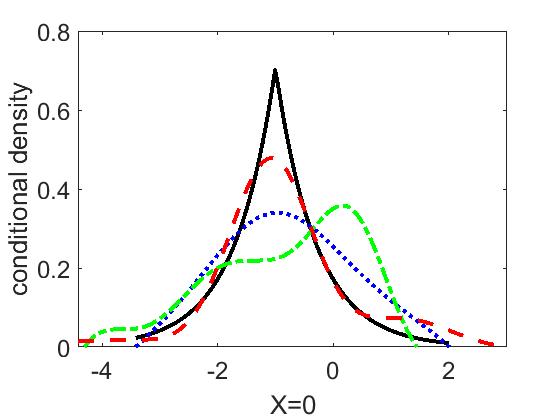}&
\includegraphics[width=0.33\linewidth]{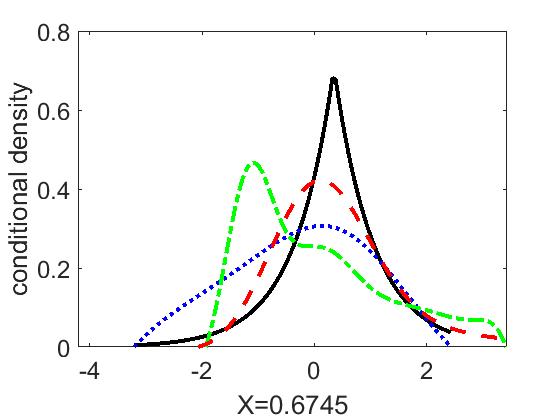}\\
\includegraphics[width=0.33\linewidth]{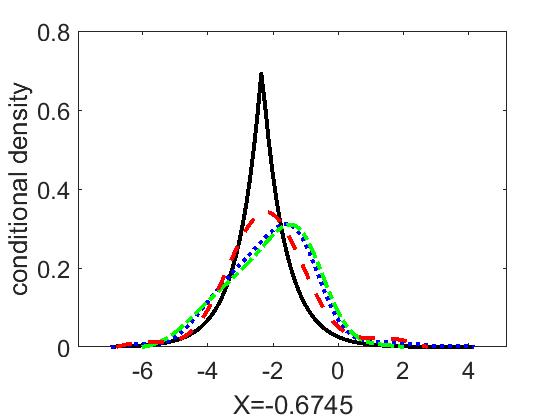} &
\includegraphics[width=0.33\linewidth]{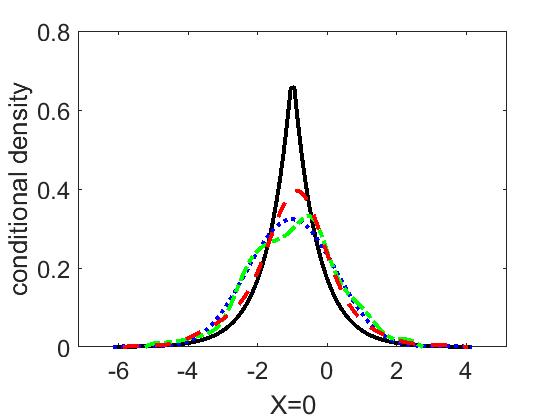}&
\includegraphics[width=0.33\linewidth]{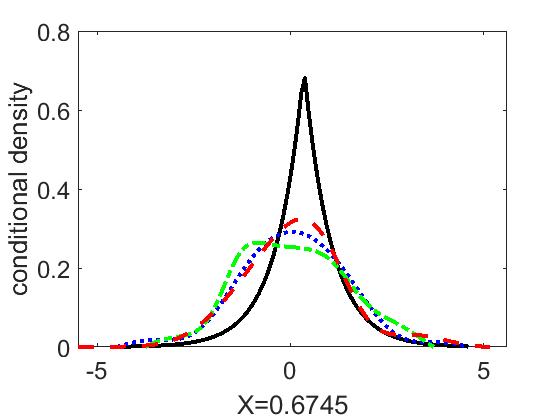}\\
\includegraphics[width=0.33\linewidth]{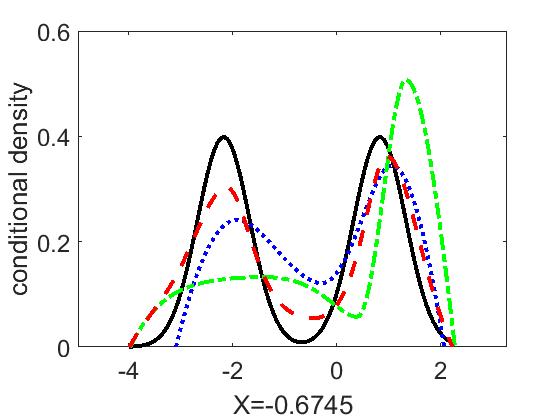} &
\includegraphics[width=0.33\linewidth]{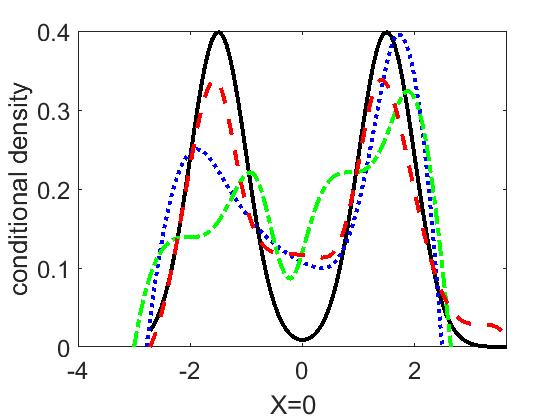}&
\includegraphics[width=0.33\linewidth]{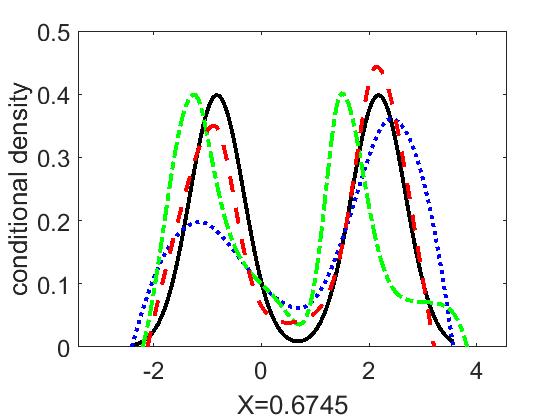}\\
\includegraphics[width=0.33\linewidth]{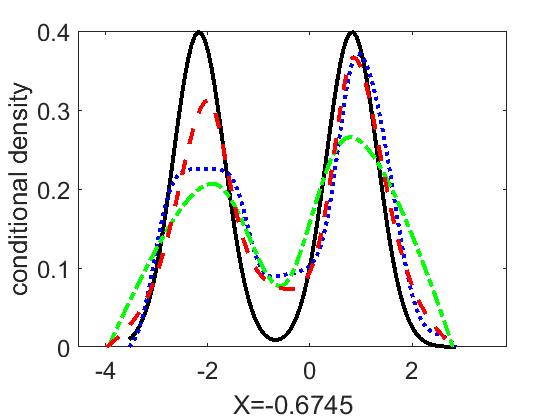} &
\includegraphics[width=0.33\linewidth]{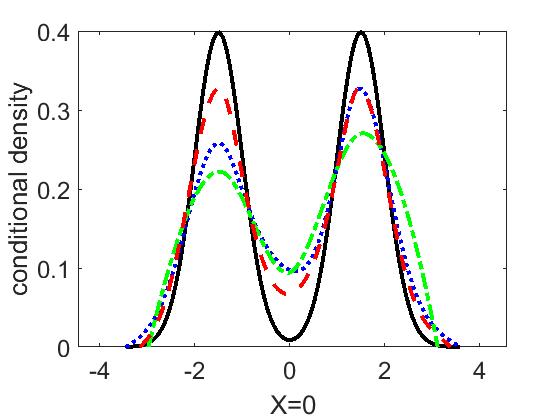}&
\includegraphics[width=0.33\linewidth]{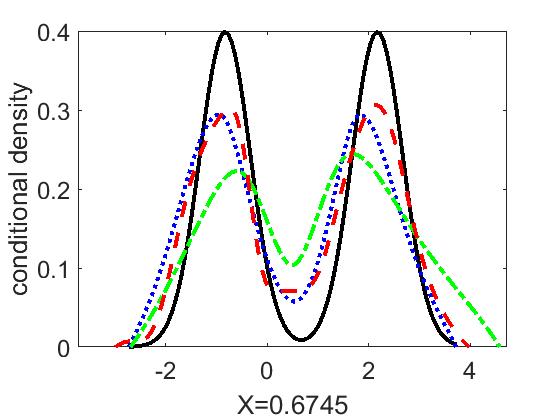}\\
\end{tabular}

\caption{\it The figure illustrates the true density(solid line) and the estimated density with median performance(dotted line); the best performance (dashed line) and the worst performance(dashed dotted line) at three different locations in the support of the predictors.}
\label{fig:simulatedcde}
\end{center}
\end{figure}

\begin{table}[t!] 
\caption{\it A quantitative evaluation of the performance of Warped estimate for two simulated conditional density examples.}
\label{tablecde1}\par
\vskip .2cm
\centering
\begin{tabular}{|r|r|c|rr|rr|rr|} \hline 
\multirow{2}{*}{Example} & \multicolumn{2}{c}{Location:} & \multicolumn{2}{c|}{$25$th quantile} & \multicolumn{2}{c|}{$50$th quantile} & \multicolumn{2}{c|}{$75$th quantile}\\ \cline{2-9}
& \multicolumn{1}{c}{$n$} &  \multicolumn{1}{c}{Norm}
           & \multicolumn{1}{c}{Mean} & \multicolumn{1}{c|}{std.dev} & \multicolumn{1}{c}{Mean} & \multicolumn{1}{c|}{std.dev} & \multicolumn{1}{c}{Mean} & \multicolumn{1}{c|}{std.dev}\\
 \hline
\multirow{6}{*}{Unimodal cde} & \multirow{ 3}{*}{100} & $\lone$ &  7.9623 & 2.0550 & 6.9829 & 1.9716 &  8.2243 & 2.4475   \\
& &  $\ltwo$ & 1.1658 & 0.2539 & 1.0132 & 0.2388 & 1.1884 & 0.2876 \\
& &  $\linf$ &  0.4056 & 0.0570 & 0.3586 & 0.0589 & 0.4094 & 0.0595 \\ \cline{2-9} 
 & \multirow{ 3}{*}{1000} & $\lone$ &  5.1280 & 0.7392 &  4.1239 & 0.6308 & 5.2136 & 0.7194   \\
& &  $\ltwo$ &  0.9271 & 0.0929 & 0.7537 & 0.0812 & 0.9297 & 0.0846\\
& &  $\linf$ & 0.3977 & 0.0275 & 0.3494 & 0.0256 & 0.3966 & 0.0239 \\ \hline
\multirow{6}{*}{Bimodal cde} & \multirow{ 3}{*}{100} & $\lone$ &  8.3386 & 1.5436 & 7.0026 & 1.2024 & 7.8851 & 1.4847  \\
& &  $\ltwo$ &  0.9983 & 0.1802 & 0.8374 & 0.1307 & 0.9478 & 0.1695 \\
& &  $\linf$ &  0.2044 & 0.0384 & 0.1773 & 0.0349 & 0.2015 & 0.0430  \\ \cline{2-9}
 & \multirow{ 3}{*}{1000} & $\lone$ &  5.8890 & 0.6466 & 4.9654 & 0.9002 & 5.9918 & 0.6902\\
& &  $\ltwo$ & 0.7201 & 0.0756 & 0.6111 & 0.1001 & 0.7285 & 0.0766\\
& &  $\linf$ &  0.1574 & 0.0205 &  0.1406 & 0.0255 & 0.1561 & 0.0180
\\ \hline
\end{tabular}
\end{table}

\section{Application to speedflow data}
\label{sec:application}

As an application of modality constrained conditional density estimation, we use the speed flow data for Californian driveways from the package \texttt{hdrcde} in R. The scatterplot shown in Figure \ref{fig:real} shows the distinct bimodal nature of the speed distribution for traffic flow between $1000$ and $1620$ vehicles per lane per hour, corresponding to uncongested and congested traffic. This range of traffic flow where a bimodal nature is apparent is already studied in \citet{einbeck2006modelling}. They study that beyond traffic flow of $1620$ the regression curves corresponding to uncongested and congested traffic are no longer distinguishable. So, we consider the speed flow in that range ($772$ observations) and compute the conditional density of the speed with bimodality constraint on the shape, given flow$=1400$ using our prescribed $50 \%$ of the $772$ observations. The middle panel of Figure \ref{fig:real} (solid line) shows the conditional density estimate for flow$=1400$ using the proposed approach. The left mode is $35.56$ mph and the right mode is $59.01$. \citet{einbeck2006modelling} also obtains a very similar conditional density estimate. The left mode in their case is at $32.65$ mph and the right mode is at $59.18$. On the other hand if we carry out a traditional conditional  density estimation using \texttt{NP} package, we see several spurious bumps as shown in the middle panel of Figure \ref{fig:real} (dotted line) and zoomed in on the right panel. The bumpy nature is present in the \texttt{NP} estimate constructed using $772$ observations (not presented) as well as only using $50\%$ of the observations as in our approach. This results in over-interpreting the tail and consequently a lack of interpretability for the modes themselves. Thus constraining the number of modes clearly helps lending interpretability to the resultant density shape. 


 \begin{figure}[t!]
\begin{center}
\begin{tabular}{ccc}
\includegraphics[width=0.33\linewidth]{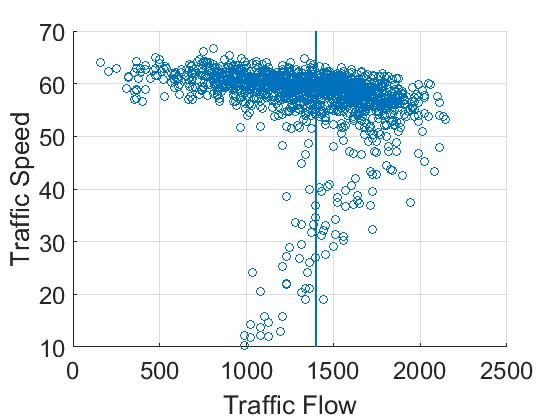} &
\includegraphics[width=0.33\linewidth]{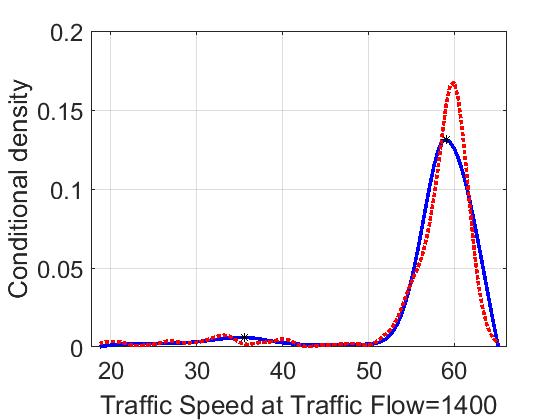}&
\includegraphics[width=0.33\linewidth]{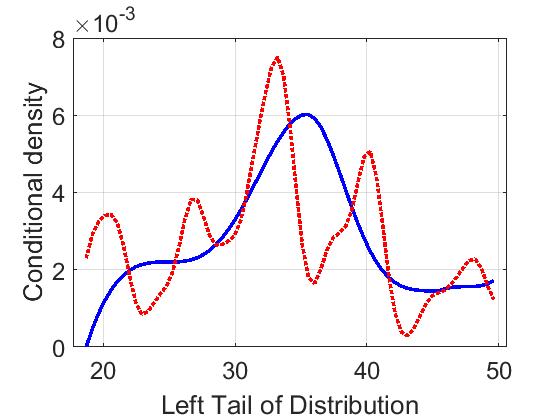}\\
\end{tabular}

\caption{\it The figure illustrates the scatterplot(top left) and the estimated density of traffic speed at traffic flow $1400$(top right)for warped method (solid) and NP package(dotted).}
\label{fig:real}
\end{center}
\end{figure}

\section{Discussion}
\label{sec:discussion}

Density estimation and shape constrained density estimation are very rich topics of research in Statistics. The current paper focuses on introducing a novel framework using geometric tools which enables one to perform shape constrained density estimation with a broader notion of shapes than before. Specifically, exploiting the geometry of the group of diffeomorphisms, one can shift the problem of finding a density with the appropriate shape constraints to finding an appropriate diffeomorphism given an initial shape, based on available data. In recent years, most datasets on a variable of interest have associated covariates which make the problem of conditional density estimation very useful and practically relevant. An advantage of the proposed framework is the easy extendibility to the conditional density estimation problem via a weighted maximum likelihood objective function.  Theoretically, the framework introduced is the first that can perform any $M$ modality constrained density estimation. However practically the performance suffers when the constrained shape is too complicated or if the number of modes $M$ is very high (greater than $4$) because the inbuilt solver {\it fmincon} gets stuck in local suboptimal solutions resulting in unsatisfactory density estimates.

Since the paper primarily focuses on introducing the framework and the group action that enables shape constrained estimation, it has only lightly touched upon or not explored many associated problems of density estimation. For example, the choice of the number of basis elements for tangent space representation, the choice of the basis set itself, or the choice of penalty for penalized estimation and boundary estimation are very rich and important problems themselves in their own right. This paper simply uses AIC as the penalty to select the number of basis elements because in comparison, BIC tends to choose insufficient number of parameters. Also, experiments using a Meyer basis set for the tangent space representation of the diffeomorphisms yielded similar results, though the Meyer wavelets seemed to require more observations than Fourier basis set to obtain satisfactory results. Keeping in mind that the basis set representation is for approximating the warping functions and not the density functions directly, one can choose different basis sets for a comparative study of performances. The paper follows \cite{Turnbull2014-ag} for choosing the boundaries. 

For conditional density estimation, the weights defined as gaussian kernel can also be defined using any other kernel. The choice of gaussian kernel (and the \ltwo  loss function) was as an illustration. A possible extension not explored in the paper is to develop the framework in situations where multiple or very high number of covariates are present. Currently the bandwidth parameter is chosen adaptively based on a kernel density estimate at the location of the (scalar) covariate. It can be directly extended to $d$ covariate scenario using a $d$ variate kernel density estimate at the location of the predictors. However, such an estimate suffers from the curse of dimensionality. In applications where only a few of the covariates are relevant to the response variable, \citet{wasserman2006rodeo} developed a technique to identify the relevant variables and also obtain the corresponding bandwidth parameters. Using the obtained bandwidth parameters, one can redefine the weights and perform weighted likelihood maximization to produce a conditional density estimate.

\appendix

\section{Proof of Theorem \ref{thm2}} \label{app:thm2}

First we set some notations and some preliminary definitions. $M$ is always used to represent the number of modes. Let $g_{\lambda}^{\omega}$ denote the $M$-modal template defined earlier as a function of $\lambda$. Here ${\mathbf \lambda}$ denotes the vector $(\lambda_1 \cdots \lambda_{2M-2})$, corresponding to the $2M-2$ height ratios of the last $2M-2$ critical points with respect to the first critical point. Let $k_n$ be the number of basis elements used for approximating the warping function $\gamma$. Let $c= (c_1, \cdots , c_{k_n})$ be the corresponding coefficient vector. Now, define $\theta_n= (c_1, \cdots c_n, \lambda_1 , \cdots, \lambda_{2M-2})$. In what follows, $c$ is used to represent the coefficient vectors. $B_i$ denotes the $i$th basis element for the tangent space representation of warping functions. $\gamma_c$ is used to represent the warping function corresponding to the coefficient vector $c$. $l_1, l_2 ,\cdots, C, C_1, \cdots $ represent specific constants. $M_0, M_1, M_2, \cdots$ represent generic constants that can change values from step to step but are otherwise independent of other terms.

 
Let $\lambda_0 \in \real^{2M-2}$ be the {\it height ratio vector} for $p_0$, as defined in Section 2. Then from Theorem \ref{gamexist} there exists an infinite dimensional ${c_0}$ such that $p_0$ can be represented as 
\[
p_0 = (g_{\lambda_0}^0 \circ \gamma_{c_0}) / \int_0^1 (g_{\lambda_0}^0 \circ \gamma_{c_0}) dt.
\]   
Note that for each $t \in [0,1]$, $\|\sum_{i=1}^{\infty} c_iB_i(t)\|=\sqrt{\int{( \sum_{i=1}^{\infty} c_iB_i(t))}^2}<2\pi$. This corresponds to $ \max_{1\leq j \leq k_n} \abs{c_{0j}}<l_0$ and thus $|c_{0i}|<l_0$ for all $i$, for some $l_0$. Then the parameter space for $\mathcal{F}$ is $\Theta=\{(c,\lambda): c \in {[-l_0,l_0]}^{\infty}, \lambda \in \Lambda \subset {(0,\infty)}^{2M-2}\}$. Let $\omega =\omega (n) =  \Omega /\log n$.where $\Omega$ is a constant.  Let $r_n^u=\Omega_1\log n$
and $r_n^l=\Omega_1/\log n$ where $\Omega_1 <\Omega$ is some constant. Define $\mathcal{F}_n$ as the approximating space of densities for $\mathcal{F}$. Define $\Theta_n=\{\theta_n=(c,\lambda) |c \in {[-l_0,l_0]}^{k_n}, \lambda \in {(r_n^l,r_n^u]}^{2M-2}\} $ as the parameter space for the approximating space $\mathcal{F}_n$. Then $\mathcal{F}_n= (g_{\lambda}^{\omega} \circ \gamma_c) / \int_0^1 (g_{\lambda}^{\omega} \circ \gamma_c) dt$ where $\theta_n=(c,\lambda) \in \Theta_n$. 
We use the method of sieve maximum likelihood estimation to obtain the estimate in the approximating space $\mathcal{F}_n$ of $\mathcal{F}$ and to derive an upper bound of the convergence rate of the density estimate to the final density.

We call a finite set $\{(f_{j}^{L},f_{j}^{U}),j=1,\dots,N\}$ a Hellinger $u$-bracketing of $\mathscr{F}_n$ if 
${\|{f_{j}^{L}}^{1/2}-{f_{j}^{U}}^{1/2}\|}_{2} \leq u$ for $j=1,\dots,N$, and for any $p \in \mathscr{F}_n$, there is a $j$ such that $f_{j}^{L} \leq p \leq f_{j}^{U}$. 
Let $H(u,\mathscr{F}_n)$ denote the Hellinger metric entropy of $\mathscr{F}_n$, defined as the logarithm of the cardinality of the $u$-bracketing of $\mathscr{F}_n$ of the smallest size.  
To control the approximation error of $\mathcal{F}_n$ to $\mathcal{F}$, \citet{wong1995probability} introduces a family of discrepancies. They define $\delta_n (p_0,\mathscr{F}_n)=\text{inf}_{p \in \mathscr{F}_n} \rho (p_0,p)$, called the $\rho$-approximation error at $p_0$.
The control of the approximation error of $\mathscr{F}_n$ at $p_0$ is necessary for obtaining results on the convergence rate for sieve MLEs.
We follow \citet{wong1995probability} to introduce a family of indexes of discrepency in order to formulate the condition on the approximation error of $\mathscr{F}_n$.
Let
\[
Z_{\alpha} (x) = \left\{\begin{array}{lr}(1/\alpha)[x^{\alpha} -1], -1<\alpha<0 \text{ or }0<\alpha \leq 1\\
\log{x}, \text{  if }\alpha=0+. 
\end{array}
\right.
\]
Set $x=p_0/p$ and define $\rho_{\alpha} (p_0,p)=E_pZ_{\alpha}(X)=\int p_0Z_{\alpha}(p_0/p).$
We define $\delta_n (\alpha) =\inf _{p \in \mathscr{F}_n} \rho_{\alpha} (p_0,p)$. For our purposes we set $\alpha=1$.
Thus we have $\delta_n (1) = \underset{p \in \mathcal{F}_n}{\mbox{inf}} \int {(p_0-p)}^2/p$.

Let $f_1$ and $f_2$ be two densities in $\mathcal{F}_n$. Let $\theta_1=({c_1},\lambda_1)$ and $\theta_2=({c_2},\lambda_2)$ be the corresponding parameters. $g_1^{\omega}$ and $g_2^{\omega}$ be the corresponding templates. Let $M$ be the number of modes and $\gamma_1$ and $\gamma_2$ be the warping functions corresponding to the coefficients. Then we have

\begin{lem}
 $|f_1 - f_2|\leq M_0 \sum_{i=1}^{k_n +2M-2}{|\theta_{1i}- \theta_{2i}|}$
\label{equiv1}, for some constant $M_0> 0$. 
\end{lem}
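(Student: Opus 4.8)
The plan is to derive a uniform Lipschitz estimate for the parametrization map $\theta=(c,\lambda)\mapsto f_\theta:=(g_\lambda^\omega\circ\gamma_c)\big/\int_0^1(g_\lambda^\omega\circ\gamma_c)\,dt$ over the compact box $\Theta_n$, by treating the dependence on $c$ (which enters only through $\gamma_c$) and on $\lambda$ (which enters only through $g_\lambda^\omega$) separately, and then peeling off the normalization; since $|f_1-f_2|$ in the statement is a function of $t$, the claim is exactly $\|f_1-f_2\|_\infty\le M_0\sum_i|\theta_{1i}-\theta_{2i}|$. Throughout I would let $M_0$ absorb any quantity bounded uniformly on $\Theta_n$; a few of these grow like a power of $\log n$ through $r_n^u$, $1/r_n^l$ and $1/\omega$, but that is harmless for the metric-entropy estimate in which this lemma is used, since $M_0=O(\mathrm{polylog}\,n)$ suffices there.

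First I would bound the $\lambda$-dependence of the template. For each fixed $t$, $g_\lambda^\omega(t)$ is a convex combination of the two node heights of the fixed mesh $\{a_k=k/(2M)\}$ bracketing $t$, and every node height is either the constant $\omega$, the constant $1$, or a single coordinate $\lambda_i$; hence $\|g_{\lambda_1}^\omega-g_{\lambda_2}^\omega\|_\infty\le\|\lambda_1-\lambda_2\|_1$. In addition $g_\lambda^\omega$ is piecewise linear on this mesh with slopes bounded in absolute value by a fixed multiple of $Mr_n^u$, so it is Lipschitz on $[0,1]$ with a constant $L_g$ uniform over $\lambda\in\Lambda\cap(r_n^l,r_n^u]^{2M-2}$.

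Next I would bound the $c$-dependence of $\gamma_c$ by tracing the composite map $G$. Since $\{B_j\}$ is orthonormal (or orthogonal with uniformly bounded $\ltwo$ norms), $\|\sum_j(c_{1j}-c_{2j})B_j\|_2\le C_B\|c_1-c_2\|_1$; the map $v\mapsto\exp_{{\bf 1}}(v)=\cos(\|v\|){\bf 1}+\frac{\sin\|v\|}{\|v\|}v$ is $C^1$ (it is analytic in $\|v\|^2$), hence Lipschitz with some constant $L_{\exp}$ on the ball $\{\|v\|\le 2\pi\}$ that contains $v_c$ for all $c\in\Theta_n$; and since $\|q_c\|_2=1$ for every $c$ (the image of $\exp_{{\bf 1}}$ lies on $\s_\infty$), Cauchy--Schwarz gives $|\gamma_{c_1}(t)-\gamma_{c_2}(t)|=\big|\int_0^t(q_{c_1}^2-q_{c_2}^2)\,ds\big|\le\|q_{c_1}+q_{c_2}\|_2\,\|q_{c_1}-q_{c_2}\|_2\le 2L_{\exp}C_B\|c_1-c_2\|_1$ for all $t$. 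Combining the two steps with $h_j:=g_{\lambda_j}^\omega\circ\gamma_{c_j}$ and the split $h_1-h_2=(g_{\lambda_1}^\omega-g_{\lambda_2}^\omega)\circ\gamma_{c_1}+(g_{\lambda_2}^\omega\circ\gamma_{c_1}-g_{\lambda_2}^\omega\circ\gamma_{c_2})$ gives $\|h_1-h_2\|_\infty\le\|\lambda_1-\lambda_2\|_1+L_g\|\gamma_{c_1}-\gamma_{c_2}\|_\infty$. Finally, since $g_\lambda^\omega\ge r_n^l>0$ and $\le\max(1,r_n^u)$ pointwise for $n$ large, one has $\int h_j\in[r_n^l,\max(1,r_n^u)]$, and the elementary bound $|f_1-f_2|\le\frac{\|h_1-h_2\|_\infty}{\int h_1}+h_2\frac{|\int h_2-\int h_1|}{\int h_1\int h_2}$ reduces $\|f_1-f_2\|_\infty$ to a uniform multiple of $\|h_1-h_2\|_\infty$; collecting constants yields $\|f_1-f_2\|_\infty\le M_0(\|\lambda_1-\lambda_2\|_1+\|c_1-c_2\|_1)=M_0\sum_{i=1}^{k_n+2M-2}|\theta_{1i}-\theta_{2i}|$.

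The main obstacle, I expect, is not any single computation but making the Lipschitz constant of $G$ genuinely independent of $c$ and of $k_n$. The danger is that the retraction $\exp_{{\bf 1}}$ misbehaves for large $\|v\|$, or that the $\ltwo\!\to\!\linf$ passage needed to compose $\gamma_c$ with the piecewise-linear template introduces a factor of $k_n$. The restriction of the parameter box to the region where $\|v_c\|\le 2\pi$ (which is exactly why the domain $V_\pi^J$, and hence $l_0$, were introduced) controls the first, while the identities $\|q_c\|_2=1$ (so the Cauchy--Schwarz step costs only the absolute constant $2$) and $\|\sum_ja_jB_j\|_2\le C_B\|a\|_1$ for an orthonormal basis control the second. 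A secondary point deserving a line of care is that $L_g$ and the lower bound $r_n^l$ depend on $n$ through $r_n^u$ and $\omega$; this is acceptable precisely because only $M_0=O(\mathrm{polylog}\,n)$ is required downstream.
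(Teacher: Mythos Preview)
Your proposal is correct and follows essentially the same decomposition as the paper: split $g_{\lambda_1}^\omega\circ\gamma_{c_1}-g_{\lambda_2}^\omega\circ\gamma_{c_2}$ into a template difference and a warping difference, use the piecewise-linear Lipschitz bound on $g_\lambda^\omega$, and then handle the normalizing constant by the same add-and-subtract in the quotient. Two small differences are worth noting: the paper simply cites an earlier work for the bound $|\gamma_{c_1}(t)-\gamma_{c_2}(t)|\le M_2\|c_1-c_2\|_1$, whereas you derive it explicitly via the Lipschitz property of $\exp_{\bf 1}$ on the ball of radius $2\pi$ together with Cauchy--Schwarz on $q_1^2-q_2^2=(q_1+q_2)(q_1-q_2)$; and you are more careful than the paper in flagging that $M_0$ picks up factors of $r_n^u$, $1/r_n^l$, $1/\omega$ and is therefore only $O(\mathrm{polylog}\,n)$ rather than an absolute constant, which is indeed all that the downstream entropy calculation needs.
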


\begin{proof}
First, following the steps of \citet{dasgupta2017geometric} we observe that $|\gamma_1 (t) -\gamma_2 (t)| < M_2 \sum_{i=1}^{k_n}| c_{1i} - c_{2i} | < M_1 \sum_{i=1}^{k_n +2M-2}{|\theta_{1i}- \theta_{2i}|}$ since the $c_i$'s are simply the first few coordinates of $\theta$.
Next, observe that $
|g_1^{\omega} \circ \gamma_1 - g_2^{\omega} \circ \gamma_2| \leq |g_1^{\omega} \circ \gamma_1 - g_1^{\omega} \circ \gamma_2| +
|g_1^{\omega} \circ \gamma_2 - g_2^{\omega} \circ \gamma_2|$.
By construction, $g_1^{\omega}$ is Lipschitz continuous, and hence 
$
|g_1^{\omega} \circ \gamma_1 - g_1^{\omega} \circ \gamma_2|\leq M_2|\gamma_1 -\gamma_2| \leq M_3 \sum_{i=1}^{k_n +2M-2}{|\theta_{1i}- \theta_{2i}|}$.
Now, we have $|g_1^{\omega} \circ \gamma_2 - g_2^{\omega} \circ \gamma_2| \leq \underset{1\leq i \leq (2M-2)}{\max}|\lambda_{1i} - \lambda_{2i}| \leq M_2 \sum_{i=1}^{k_n +2M-2}{|\theta_{1i}- \theta_{2i}|}$.
Thus, it follows that $|g_1^{\omega} \circ \gamma_1 - g_2^{\omega} \circ \gamma_2| \leq M_1  \sum_{i=1}^{k_n +2M-2}{|\theta_{1i}- \theta_{2i}|}$. 
Using the above observations, we prove the Lemma.

Let $I_1=\int_0^1 g_1^{\omega} \circ \gamma_1 dt$ and $I_2=\int_0^1 g_2^{\omega} \circ \gamma_2 dt$. Then 
we have   $ 0 < r_n^l = \mbox{min}(\inf_i \lambda_{ki}, g_1^{\omega}(0),g_1^{\omega}(1))< I_k<\mbox{max}(1,\sup_i \lambda_{ki})=r_n^u $ for $k=1,2$.
Now, we have 
\begin{eqnarray*}
|f_1 - f_2| =\abs{\frac{(g_1^{\omega}\circ \gamma_1)I_1 - (g_2^{\omega}\circ \gamma_2)I_2}{I_1I_2}}=\abs{\frac{(g_1^{\omega}\circ \gamma_1)I_1 - (g_2^{\omega} \circ \gamma_2)I_1}{I_1I_2} + \frac{(g_2^{\omega}\circ \lambda_2)(I_1 - I_2)}{I_1I_2}}.
\end{eqnarray*}
Hence, 
\begin{eqnarray*}
|f_1 - f_2| \leq \abs{\frac{(g_1^{\omega}\circ \gamma_1) - (g_2^{\omega} \circ \gamma_2)}{I_2}} + \frac{(g_2^{\omega} \circ \lambda_2)}{I_1I_2}\abs{I_1 - I_2} \leq M_1  \sum_{i=1}^{k_n +2M-2}{|\theta_{1i}- \theta_{2i}|} + \frac{(g_2^{\omega} \circ \lambda_2)}{I_1I_2}\abs{I_1 - I_2}
\end{eqnarray*}
where the last inequality is obtained using the fact that $I_2$ is a finite positive number.
Now, $(g_2^{\omega} \circ \lambda_2)< \mbox{max}(1,r_n^u)$. Thus $(g_2^{\omega} \circ \lambda_2)/I_1I_2$ is bounded above by $r_n^{-2l} \mbox{max}(1,r_n^u)$. Next, it is easy to check that $ \abs{I_1 - I_2} \leq M_1{\|(g_1^{\omega}\circ \gamma_1) - (g_2^{\omega} \circ \gamma_2)\|}_{\infty} \leq M_2{\|(g_1^{\omega}\circ \gamma_1) - (g_2^{\omega} \circ \gamma_2)\|}_{1}$. Thus we have $|f_1 - f_2|\leq M_0 \sum_{i=1}^{k_n +2M-2}{|\theta_{1i}- \theta_{2i}|}$. 
\end{proof}

\begin{rem}
It follows that $H(f_1,f_2) <l_1\sqrt{{\|f_1-f_2\|}_1} <l_1\sqrt{ \sum_{i=1}^{k_n +2M-2}{|\theta_{1i}- \theta_{2i}|} }<  \linebreak 
l_1\sqrt{ \max_{1\leq j \leq k_n+2M-2} \abs{\theta_{1j} - \theta_{2j}}}$ for some fixed $l_1>0$ where $H(f_1,f_2)$ is the Hellinger metric between two densities $f_1$ and $f_2$.
\end{rem}

\begin{cor}
Let $p_0$ be the true density. If $k_n \sim n^{1/(2\beta+1)}$, then asymptotically $ \underset{f \in \mathcal{F}_n}{\inf} {\|p_0 - f\|}_{\infty} \sim n^{-\beta/(2\beta + 1)}$ where $\beta$ is the order of the Sobolev space.  
\end{cor}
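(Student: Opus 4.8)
\medskip
\noindent\emph{Proof idea.} The idea is to read this as the approximation-error half of the sieve analysis behind Theorem~\ref{thm2}: with $k_n\sim n^{1/(2\beta+1)}$ one only has to show that the finite-dimensional family $\mathcal F_n$ comes within order $k_n^{-\beta}=n^{-\beta/(2\beta+1)}$ of $p_0$ in $\|\cdot\|_\infty$. The plan is to exhibit an explicit near-optimal $f_n\in\mathcal F_n$. By Theorem~\ref{gamexist} there is an infinite coefficient vector $c_0$, with $|c_{0i}|<l_0$ for every $i$, such that $p_0=(g_{\lambda_0}^{0}\circ\gamma_{c_0})/\int_0^1 g_{\lambda_0}^{0}\circ\gamma_{c_0}\,dt$. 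Since $r_n^l\downarrow 0$ and $r_n^u\uparrow\infty$, for all large $n$ we have $\lambda_0\in(r_n^l,r_n^u]^{2M-2}$, so we may take $c^{(n)}=(c_{01},\dots,c_{0k_n})\in[-l_0,l_0]^{k_n}$ and
\[
f_n=\frac{g_{\lambda_0}^{\omega(n)}\circ\gamma_{c^{(n)}}}{\int_0^1 g_{\lambda_0}^{\omega(n)}\circ\gamma_{c^{(n)}}\,dt}\in\mathcal F_n .
\]

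\noindent The first task is to bound the warping-function truncation error $\|\gamma_{c_0}-\gamma_{c^{(n)}}\|_\infty$. Under Assumption~3, $p_0$ has smoothness $\beta$; because the template $g_{\lambda_0}^{0}$ is piecewise affine (hence bi-Lipschitz on each monotone piece) and $\gamma_0$ is recovered from $p_0$ by composition with the piecewise inverse of $g_{\lambda_0}^{0}$ exactly as in the proof of Theorem~\ref{gamexist}, this smoothness is inherited by $\gamma_0$, by its SRSF $\sqrt{\dot\gamma_0}$, and by the tangent vector $v_0=\exp^{-1}_{\bf 1}(\sqrt{\dot\gamma_0})$, away from the finitely many critical points, which are handled by the weak-derivative device already used there. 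Jackson-type estimates for the Fourier basis then give $\|v_0-\sum_{i\le k_n}c_{0i}B_i\|=O(k_n^{-\beta})$, and carrying this through the locally Lipschitz maps $\exp_{\bf 1}(\cdot)$, squaring, and integration yields $\|\gamma_{c_0}-\gamma_{c^{(n)}}\|_\infty=O(k_n^{-\beta})$; this is precisely the warping-approximation estimate of \citet{dasgupta2017geometric}, which I would cite here.

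\noindent The second task is to transfer the bound to density space, reusing the bookkeeping of Lemma~\ref{equiv1}. Write $\|p_0-f_n\|_\infty\le\|p_0-\tilde f_n\|_\infty+\|\tilde f_n-f_n\|_\infty$ with $\tilde f_n=(g_{\lambda_0}^{0}\circ\gamma_{c^{(n)}})/\int_0^1 g_{\lambda_0}^{0}\circ\gamma_{c^{(n)}}\,dt$. In the first difference only $\gamma$ changes, so the Lipschitz-composition plus normalizing-constant argument of Lemma~\ref{equiv1} (the normalizers staying between positive constants near the optimum) gives $\|p_0-\tilde f_n\|_\infty\le M_0\|\gamma_{c_0}-\gamma_{c^{(n)}}\|_\infty=O(k_n^{-\beta})$. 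In the second difference only the template changes, $g_{\lambda_0}^{0}\to g_{\lambda_0}^{\omega(n)}$, and since these two templates differ only on the two outermost sub-intervals, by at most $\omega(n)$, the same argument gives $\|\tilde f_n-f_n\|_\infty=O(\omega(n))$. Hence $\inf_{f\in\mathcal F_n}\|p_0-f\|_\infty\le\|p_0-f_n\|_\infty=O(k_n^{-\beta}+\omega(n))$, which is $O(n^{-\beta/(2\beta+1)})$ once $k_n\sim n^{1/(2\beta+1)}$ and the sieve boundary height $\omega(n)$ is driven to $0$ at least this fast. Sharpness of the exponent (the ``$\sim$'') is the classical lower bound from approximation theory: a generic smoothness-$\beta$ function admits no degree-$k_n$ trigonometric approximation better than order $k_n^{-\beta}$, so the infimum cannot beat this order uniformly over the class. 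Finally, squaring and using $\int_0^1(p_0-p)^2/p\le\|p_0-p\|_\infty^2\,\|1/p\|_1$ (the densities in $\mathcal F_n$ being bounded below) turns this into the bound on $\delta_n(1)$ that feeds the sieve-MLE argument for Theorem~\ref{thm2}.

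\noindent The main obstacle I anticipate is the boundary-and-normalization bookkeeping: keeping the normalizing constant $\int_0^1 g_\lambda^{\omega}\circ\gamma_c$, which a priori only lies in the \emph{widening} window $[r_n^l,r_n^u]$, controlled tightly enough that it does not swamp the $k_n^{-\beta}$ rate, and simultaneously ensuring that the template perturbation $\omega(n)$ is genuinely negligible at the boundary — this is the step where one must be careful about whether $\omega(n)=\Omega/\log n$ is fast enough or the statement should be read up to that slowly varying correction (equivalently, on compact subsets of $(0,1)$). A secondary technical point is making rigorous the transfer of smoothness from $p_0$ through $\gamma_0$ to $v_0$, and back through $\int(\exp_{\bf 1}(\cdot))^2$, since $\gamma_0$ generically has kinks at the critical points.
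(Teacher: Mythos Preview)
Your approach is far more detailed than the paper's. The paper does not actually argue anything here: its entire ``proof'' of the Corollary is a single sentence stating that the result ``follows from standard approximation results in $\ltwo$ basis (e.g.\ Fourier) of H\"older functions of order~$\beta$,'' with a citation to \citet{triebel2006theory}. So the paper treats the corollary as a black box invoking Jackson-type rates, and does not carry out the explicit construction $p_0\mapsto\gamma_{c_0}\mapsto v_0\mapsto$ truncation $\mapsto f_n$ that you outline, nor the two-step decomposition $\|p_0-\tilde f_n\|_\infty+\|\tilde f_n-f_n\|_\infty$, nor the Lipschitz bookkeeping via Lemma~\ref{equiv1}.

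Your route is a genuine, essentially sound, sketch of what a proof would look like, and it buys something the paper's citation does not: transparency about where the rate comes from and, importantly, where it might fail. In particular, the obstacle you flag at the end is real and the paper simply does not address it. With $\omega(n)=\Omega/\log n$ as defined in the appendix, the template perturbation $g^0_{\lambda_0}\to g^{\omega(n)}_{\lambda_0}$ contributes an $L^\infty$ error of order $1/\log n$, which \emph{dominates} $n^{-\beta/(2\beta+1)}$ and so prevents the corollary from holding as stated unless one reads ``$\sim$'' very loosely or restricts to compact subsets of $(0,1)$. The paper's downstream use of the corollary (to bound $\delta_n(1)$ and feed Theorem~\ref{thm2}) absorbs a $\log n$ factor anyway, so the final theorem survives, but the corollary itself, read literally, has exactly the gap you identify. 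Your second technical worry --- transferring H\"older/Sobolev smoothness of $p_0$ through the piecewise inversion to $\gamma_0$, then through the SRSF and $\exp_{\bf 1}^{-1}$ to $v_0$ --- is also legitimate and is likewise swept under the rug by the paper's citation.
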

This corollary follows from standard approximation results in $\ltwo$ basis (e.g. Fourier) of  H\"{o}lder functions of order $\beta$. For a detailed discussion please refer to \citet{triebel2006theory}. 

\begin{lem}
 There exists positive constants $C_3,C_4$, such that for some positive $\epsilon<1$, 
\begin{equation}
\int_{{\epsilon}^2/2^8}^{\sqrt{2}\epsilon} {H}^{1/2} (\frac{u}{C_3},\mathscr{F}_n) du \leq C_4n^{1/2}{\epsilon}^2
\label{hellinger}
\end{equation}
\end{lem}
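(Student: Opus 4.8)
The plan is to bound the Hellinger bracketing metric entropy $H(\cdot,\mathscr{F}_n)$ (as defined in the Appendix) by a covering number of the finite-dimensional parameter box $\Theta_n$, and then to verify \eqref{hellinger} by a direct estimate of the entropy integral, exploiting the fact that the rate in Theorem~\ref{thm2} has been chosen precisely so that $n^{1/2}\epsilon_n^*$ is of the same order as $\sqrt{k_n\log n}$. Throughout I write $d_n=k_n+2M-2$ for the dimension of $\Theta_n$ and, for $\theta=(c,\lambda)\in\Theta_n$, let $f_\theta=(g_\lambda^\omega\circ\gamma_c)/\int_0^1(g_\lambda^\omega\circ\gamma_c)\,dt\in\mathscr{F}_n$. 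Since $k_n\sim n^{1/(2\beta+1)}$, we have $d_n\sim k_n$ and $\log d_n\lesssim\log n$.

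\emph{Entropy bound.} Given a target radius $u\in(0,1)$, set $\delta=u^2/(2M_0 d_n)$, with $M_0$ the constant of Lemma~\ref{equiv1}, and take an $\ltwo$-free $\ell_\infty$-net of $\Theta_n=[-l_0,l_0]^{k_n}\times(r_n^l,r_n^u]^{2M-2}$ of mesh $\delta$. Because the coordinate side lengths are $2l_0$ and $r_n^u-r_n^l\le\Omega_1\log n$, the net has at most $N:=(l_0/\delta+1)^{k_n}\,(\Omega_1\log n/\delta+1)^{2M-2}$ points. For a net point $\theta$, Lemma~\ref{equiv1} gives $\|f_\theta-f_{\theta'}\|_\infty\le M_0\sum_i|\theta_i-\theta'_i|\le M_0 d_n\delta=:\Delta$ for every $\theta'$ in the corresponding cell, so the pair $\big((f_\theta-\Delta)_+,\ f_\theta+\Delta\big)$ brackets all such $f_{\theta'}$. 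Using $(\sqrt a-\sqrt b)^2\le a-b$ for $a\ge b\ge0$ together with $\int_0^1\big[(f_\theta+\Delta)-(f_\theta-\Delta)_+\big]\,dt\le2\Delta$, the $\ltwo$-distance between the square roots of the two bracket functions is at most $(2\Delta)^{1/2}=u$. Hence these $N$ brackets form a Hellinger $u$-bracketing of $\mathscr{F}_n$, and
\[
H(u,\mathscr{F}_n)\;\le\;k_n\log\!\Big(1+\tfrac{2M_0 l_0 d_n}{u^2}\Big)+(2M-2)\log\!\Big(1+\tfrac{2M_0\Omega_1 d_n\log n}{u^2}\Big)\;\lesssim\;k_n\big(\log n+\log(1/u)\big),
\]
where the final step uses $d_n\sim k_n\sim n^{1/(2\beta+1)}$ and that $2M-2$ is fixed.

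\emph{Entropy integral.} Fix any $\epsilon$ of the order of $\epsilon_n^*=M_1 n^{-\beta/(2\beta+1)}\sqrt{\log n}$, as required by the sieve-MLE theorem of \citet{wong1995probability}. For $u\in[\epsilon^2/2^8,\ \sqrt2\,\epsilon]$ one has $\log(1/u)\le\log(2^8/\epsilon^2)\lesssim\log n$, since $\log(1/\epsilon_n^*)=\tfrac{\beta}{2\beta+1}\log n-\tfrac12\log\log n-\log M_1\lesssim\log n$. Combining with the displayed entropy bound, $H^{1/2}(u/C_3,\mathscr{F}_n)\lesssim\sqrt{k_n\log n}$ uniformly over this range for any fixed $C_3>0$, and therefore
\[
\int_{\epsilon^2/2^8}^{\sqrt2\,\epsilon}H^{1/2}\!\Big(\tfrac{u}{C_3},\mathscr{F}_n\Big)\,du\;\le\;C\,\sqrt{k_n\log n}\;\big(\sqrt2\,\epsilon\big)\;=\;C'\sqrt{k_n\log n}\;\epsilon.
\]
It remains to compare $\sqrt{k_n\log n}$ with $n^{1/2}\epsilon$: since $k_n\sim n^{1/(2\beta+1)}$, $\sqrt{k_n\log n}\sim n^{1/(2(2\beta+1))}\sqrt{\log n}$, while $n^{1/2}\epsilon_n^*=M_1\,n^{\frac12-\frac{\beta}{2\beta+1}}\sqrt{\log n}=M_1\,n^{1/(2(2\beta+1))}\sqrt{\log n}$, so the two are of the same order and $C'\sqrt{k_n\log n}\,\epsilon\le C_4\,n^{1/2}\epsilon^2$ for a suitable $C_4$ ($M_1$ supplying the slack). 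This is \eqref{hellinger}.

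The main obstacle is the first step: converting the pointwise Lipschitz-type estimate of Lemma~\ref{equiv1} into \emph{honest} Hellinger brackets --- keeping the lower bracket function nonnegative and genuinely controlling $\|(f^U)^{1/2}-(f^L)^{1/2}\|_2$ by the $\lone$ gap --- while tracking the two features of $\Theta_n$ that grow with $n$, namely its dimension $d_n$ and the side length $r_n^u\asymp\log n$ of the $\lambda$-coordinates. Once it is checked that neither of these spoils the estimate $H(u,\mathscr{F}_n)\lesssim k_n(\log n+\log(1/u))$, the integral bound reduces to the elementary calibration $n^{1/2}\epsilon_n^*\asymp\sqrt{k_n\log n}$ built into the definition of $\epsilon_n^*$, and everything else is routine bookkeeping of constants.
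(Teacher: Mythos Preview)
Your proof is correct and follows essentially the same route as the paper's: reduce the Hellinger bracketing entropy of $\mathscr{F}_n$ to an $\ell_\infty$-covering of the finite-dimensional parameter box $\Theta_n$ via Lemma~\ref{equiv1}, obtain an entropy bound of order $k_n\log(1/u)$ (up to $\log n$ factors), and then verify that the entropy integral is balanced by $n^{1/2}\epsilon^2$ precisely at $\epsilon\asymp n^{-\beta/(2\beta+1)}\sqrt{\log n}$. The only difference is cosmetic: the paper invokes the Hellinger bound of Remark~1 directly and works with covering numbers, whereas you build explicit sup-norm brackets $\big((f_\theta-\Delta)_+,\,f_\theta+\Delta\big)$ and carry an extra factor of $d_n$ inside the logarithm --- harmless since $\log d_n\lesssim\log n$, and arguably cleaner since it produces honest brackets rather than a cover.
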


\begin{proof}
The $u/C_3$-cover of a set $T$ with respect to a metric $\rho$ is a set $\{f^{1},\dots,f^{N}\}\subset T$ such that for each $f \in T$, there exists some $i\in \{1,\dots,N\}$ with $\rho(f,f^i)\leq u/C_3$. The covering number $N$ is the cardinality of the smallest delta cover. Then $\log(N)$ is the metric entropy for T.   First we  bound the metric entropy for $\mathscr{F}_n$.
Let us consider a fixed $f_1,f_2 \in \mathscr{F}_n$.
We choose the Hellinger metric for the space $\mathcal{F}_n$ so that we can  borrow
results directly from \cite{wong1995probability}. 
We note that  $H(f_1,f_2)\leq l_1 \sqrt{ \max_{1\leq j \leq k_n+2M-2} \abs{\theta_{1j} - \theta_{2j}}}$ for some $l_1>0$ following the Remark 1.
So finding a $u/C_3$ covering for $\mathscr{F}_n$ using Hellinger metric is equivalent to finding an $l_1 \sqrt{u/C_3}$ covering for the space of parameters $\Theta_n=\{\theta_n=(c,\lambda) |c \in {[-l_0,l_0]}^{k_n}, \lambda \in {(r_n^l,r_n^u]}^{2M-2}\} $ using $L_\infty$ norm for euclidean vectors.  The $l_1 \sqrt{u/C_3}$ covering number for $\Theta_n$ using $L_{\infty}$ norm is  ${(\frac{2l_0}{l_1} \sqrt{C_3/u})}^{k_n}{(\frac{(r_n^u - r_n^l)}{l_1} \sqrt{C_3/u})}^{(2M-2)}$. This is obtained by partitioning the intervals $[-l_0,l_0]$ and $[r_n^l,r_n^u]$ into pieces of length $l_1 \sqrt{u/C_3}$ corresponding to individual coordinates and thus obtaining the partition of $\Theta_n$ through cross product. Then in each equivalent class of the partition of $\Theta_n$ we have ${\|\theta_1 -\theta_2\|}_{\infty} \leq l_1  \sqrt{u/C_3}$. Thus the covering number is ${(\frac{2l_0}{l_1} \sqrt{C_3/u})}^{k_n}{(\frac{(r_n^u - r_n^l)}{l_1} \sqrt{C_3/u})}^{(2M-2)} < {(\frac{2l_0}{l_1} \sqrt{C_3/u})}^{k_n}{(\frac{r_n^u}{l_1} \sqrt{C_3/u})}^{(2M-2)}<{(\frac{2l_0\sqrt{C_3} + r_n^u \sqrt{C_3}}{l_1 \sqrt{u}})}^{(k_n + 2M-2)}=N$, say. So the metric entropy for $\mathcal{F}_n$, $H(u/C_3, \mathcal{F}_n)$ is bounded by $\log(N)=(k_n + 2M - 2)\log(\frac{2l_0\sqrt C_3 + r_n^u \sqrt C_3}{l_1\sqrt u})$.

Now, note that $r_n^u =\Omega_1 \log n$. Then there exists a constant $l_2$ such that $2l_0\sqrt C_3 + r_n^u \sqrt C_3 < l_2 r_n^u$. Also, let $k_n=n^{1/{(2\beta+1)}}=n^{\Delta}$. Then there exists a constant $l_3$ such that $k_n + 2M - 2< l_3k_n$.Thus we have, $\log(N) <l_3k_n \log(\frac{r_n^u l_2}{l_1 \sqrt u})$. Thus we have $H^{1/2}(u/C_3, \mathcal{F}_n)<\sqrt{\log N} < \sqrt{l_3k_n \log(\frac{r_n^u l_2}{l_1 \sqrt u})}$. Let $l_4=2^8l_2/l_1$. Hence,
\begin{eqnarray*}
\int_{{\epsilon}^2/2^8}^{\sqrt{2}\epsilon} H^{1/2}(u/C_3, \mathcal{F}_n) < \sqrt{l_3n^{\Delta}}\int \sqrt{\log \frac{l_2 r_n^u}{l_1\sqrt u}} < \sqrt{l_3n^{\Delta} \log \frac{l_4 r_n^u }{{\epsilon}^2}}(\sqrt{2}\epsilon - \frac{{\epsilon}^2}{2^8}) <\sqrt{2l_3{\epsilon}^2n^{\Delta} \log \frac{l_4 r_n^u }{{\epsilon}^2}}
\end{eqnarray*}
 Then as $\epsilon \uparrow 1$, there exists a constant $C_4$ such that $\sqrt{2l_3{\epsilon}^2n^{\Delta} \log \frac{l_4 r_n^u }{{\epsilon}^2}} \leq C_4n^{1/2}{\epsilon}^2$. Thus there exists an $\epsilon <1$ for which \eqref{hellinger} holds. 
\end{proof}

Now we are ready to provide the proof of Theorem \ref{thm2}.
\begin{proof}
Theorem $1$ of \citet{wong1995probability} states that, if  \eqref{hellinger} holds for some $\epsilon<1$, then there exists constants $C_1,C_2$ such that the following  likelihood surface inequality holds.
\begin{equation}
P^{*}\bigg( \underset{\{{\|p^{1/2}-p_{0}^{1/2}\|}_2 \geq \epsilon , p\in \mathscr{F}_n\}}{\text{sup}} \prod_{i=1}^{n} p(Y_i)/p_0(Y_i) \geq \text{exp}(-C_1n{\epsilon}^2)\bigg)\leq 4 \text{ exp}(-C_2n{\epsilon}^2)
\label{eq:p2}
\end{equation}

Next we derive an expression for an upper bound of the smallest $\epsilon<1$ that satisfies \eqref{hellinger}. Let the smallest $\epsilon$, denoted by $\epsilon_n$ be of the form $\sqrt{l_4} n^{-\eta}{(\log n)}^{\nu}$. Then
$\log \frac{l_4 r_n^u}{{\epsilon_n}^2}  =\log n^{2\eta}{(\log n)}^{1-2\nu} = (2 \eta)\log n +(1-2\nu) \log \log n < (\delta + 2\eta) \log n$.
Thus an upper bound for $\epsilon_n$ can be obtained by solving
\[
\sqrt{2l_3l_4n^{-2\eta}{(\log n)}^{2\nu}n^{\Delta} (2\eta \log n + (1-2\nu) \log \log n) } = C_4n^{1/2}l_4n^{-2\eta}{(\log n)}^{2\nu}.
\]
Setting $\nu =1/2$, and noting that $\Delta = 1/(2\beta + 1)$ we get $\eta=\beta/(2\beta + 1)$. Thus, $\epsilon_n = \sqrt{l_4} n^{\frac{-\beta}{2\beta + 1}}\sqrt{\log n}$ is an upper bound of the smallest $\epsilon$ that satisfies \eqref{hellinger}.

 Consider the family of discrepancies $\delta_n(\alpha)$ with $\alpha=1$. Let the true density be $p_0$ with corresponding parameters ${c_0}$ and $\lambda_0$.
$\delta_n(1)= \underset{p \in \mathscr{F}_n}{\inf} \rho_{1} (p_0,p).=\underset{p \in \mathscr{F}_n}{\inf} \int {(p_0-p)}^2/p$.
Let $p_1= \underset{ p \in \mathscr{F}_n}{\mbox{arginf}} \int {(p_0-p)}^2/p$.
Then $\delta_n(1) < {\|p_0-p_1\|}_{\infty}^2\int 1/f < {\|p_0 -p_1\|}_{\infty}^2 \min{(r_n^l,\omega)} \sim n^{-2\beta/{(2\beta+1)}}\log n$. 
Let $C_1,C_2$ satisfy  \eqref{eq:p2}.
 Define as in  Theorem 4 of \cite{wong1995probability}, 
\[
\epsilon_n^*  =\left\{\begin{array}{lr}\epsilon_n, \text{   if  } \delta_n (1) < \frac{1}{4}C_1{\epsilon_n}^2,\\
{(4\delta_n(1)/C_1)}^{1/2},\text{ otherwise.}
\end{array}
\right.
\]
Note that $\delta(1)$ and $\epsilon_n$ are equal upto constants.
It follows from Theorem 4 of \cite{wong1995probability},  that 
\begin{eqnarray*}\label{eq:main}
 P({\|{\hat{p}}^{1/2}-p_{0}^{1/2}\|}_2 \geq \epsilon_n^* ) \leq 5\exp \big\{-C_2n{(\epsilon_n^* )}^2\big\} + \exp \bigg\{-\frac{1}{4}nC_1{(\epsilon_n^* )}^2\bigg\}.  
\end{eqnarray*}
\end{proof}

\bibliographystyle{plainnat}
\bibliography{biblio}
\end{document}